\documentclass[12pt]{article}
\usepackage[margin=1in]{geometry}
\usepackage{setspace}
\usepackage{algorithm}
\usepackage{algpseudocode}
\usepackage{tikz}
\usepackage{enumitem}
\usepackage{caption} 
\usepackage{endnotes}

\usepackage{amssymb}
\usepackage{amsmath}
\usepackage{siunitx}
\usepackage{booktabs}
\usepackage{amsthm}
\usepackage{hyperref}
\usepackage{cleveref}
\hypersetup{colorlinks=true,citecolor=black,linkcolor=black,urlcolor=black,breaklinks=true}
\usepackage{subcaption}
\usepackage{bm}
\usepackage{eurosym}
\usepackage{pgf-pie}
\usepackage{xcolor}
\usepackage{svg}
\usepackage{amsthm}

\usepackage{natbib}
\bibpunct[, ]{(}{)}{,}{a}{}{,}%
\usepackage{etoolbox}

\providecommand{\keywords}[1]{\small	\textbf{\textit{Keywords---}} #1}

\newtheorem{corollary}{Corollary}
\newtheorem{theorem}{Theorem}
\newtheorem*{theorem*}{Theorem}
\newtheorem*{proposition*}{Proposition}

\theoremstyle{definition}
\newtheorem{example}{Example}
\newtheorem{definition}{Definition}
\newtheorem{assumption}{Assumption}
\newtheorem{remark}{Remark}
\newtheorem{criterion}{Criterion}
\newtheorem{fact}{Fact}

\DeclareMathOperator*{\argmax}{arg\,max}

\newcommand{\normtwo}[1]{\left\lVert #1 \right\rVert_2}

\newcommand*\diff{\mathop{}\!\mathrm{d}}

\title{On the Design of Stochastic Electricity Auctions}

\author{Thomas Hübner\thanks{Thomas Hübner (thuebner@ethz.ch) is with the Department of Information Technology and Electrical Engineering, ETH Zürich. This work was supported by the Swiss Federal Office of Energy’s SWEET program under the PATHFNDR project (Grant No. SI/502259). The majority of this work was conducted while the author was a visiting researcher at the Department of Economics and Nuffield College, University of Oxford, hosted by Elizabeth Baldwin and Paul Klemperer.}}

\date{}

\begin{document}

\maketitle

\vspace{-0.5cm}

\begin{abstract}
Electricity is typically traded in day-ahead auctions because many power system decisions, such as unit commitment, must be made in advance. However, when wind and solar generators sell power one day ahead, they face uncertainty about their actual production. In current day-ahead auctions, this uncertainty cannot be directly communicated, leading to inefficient use of renewable energy and suboptimal system decisions.
We show how this problem can be addressed using the concept of equilibrium under uncertainty from microeconomic theory. In particular, we demonstrate that electricity contracts should be conditioned not only on the time and location of delivery, but also on the state of the world (e.g., whether it will be windy or calm).
This requires a precise definition of the state of the world. Since there are infinitely many possible definitions, criteria are needed to select among them. We develop such criteria and show that the resulting states correspond to solutions of an optimal partitioning problem. Finally, we illustrate how these states can be computed and interpreted using a case study of offshore wind farms in the European North Sea.
\end{abstract}

\keywords{Electricity Market, Equilibrium Under Uncertainty, Stochastic Programming}

\section{Introduction}

Day-ahead auctions are the central platform for electricity trading in most markets worldwide~\citep{Cramton2017}.
These auctions typically take place around noon and enable the trading of contracts for power delivery on the following day.

Trading electricity in advance of delivery is necessary because operational decisions must be made beforehand. For example, a thermal power plant must decide whether to start up before delivery. To avoid the risk of producing at a loss, it would therefore like to sell its output before starting up. Similarly, a manufacturer may need to set its production schedule several hours before production. To avoid the risk that prices exceed the value of production, it would therefore like to buy electricity before fixing this schedule.

In contrast, firms with wind and solar generation prefer to sell electricity as close to the time of delivery as possible, as they face uncertainty about their actual output.
If renewable producers are required to sell day-ahead because most buyers prefer to transact well before real time, they face two imperfect options. They can sell too little and later attempt to sell any remaining production in intraday or balancing markets, if a buyer can be found, and otherwise curtail generation. Alternatively, they can sell too much and risk having to repurchase electricity later, potentially at very high prices. In either case, renewable generation is utilized less efficiently than it would be if trading occurred closer to delivery.

In this article, we argue that this inefficiency stems from the \textit{incompleteness} of the day-ahead auction.
Classic results in microeconomic theory show that when commodities are traded in advance under uncertainty, efficiency requires that delivery depends not only on the time and location of delivery but also on the \textit{state of the world}~\citep{debreu1959theory,arrow1964theory,radner1982equilibrium,mas1995microeconomic}.
However, in current day-ahead auctions, electricity delivery is contingent on time (e.g., 8 am or 10 am tomorrow) and location (e.g., eastern or western Denmark), but not on the state of the world (e.g., whether it is windy or calm).

We are not the first to argue that increased uncertainty from renewable generation calls for changes to the day-ahead auction design.
A large body of work has addressed this issue, including \cite{wong2007pricing,pritchard2010single,morales2012pricing,morales2014electricity,zavala2017stochastic,kazempour2018stochastic,bjorndal2018stochastic,zakeri2019pricing,exizidis2019incentive,mays2021quasi,ratha2023moving,mays2024sequential,dvorkin2025regression,singhal2026truthful}. 
In this literature, the problem is typically framed as follows: current day-ahead auctions rely on a deterministic market-clearing optimization model, whereas uncertainty from renewable generation suggests that a stochastic formulation would be more appropriate. The literature then examines the implications of adopting stochastic market-clearing models for auction design. A common finding is that these ``stochastic'' auction designs exhibit various shortcomings, often leading to the conclusion that they may be impractical.

In contrast to the articles above, we approach the problem from a different perspective by relating it to the literature on \textit{equilibrium under uncertainty} and \textit{incomplete/complete markets}~\citep{debreu1959theory,arrow1964theory,radner1982equilibrium,mas1995microeconomic}.\endnote{The theory of equilibrium under uncertainty is often applied to long-term electricity markets, where hedging instruments (e.g., contracts for difference) are known to be incomplete: agents cannot hedge against all states of the world. This incompleteness can distort investment decisions and may require government intervention (see, e.g.,~\cite{abada2026market}). We argue that electricity markets are also substantially incomplete in the \textit{short term}, due to the large share of renewable generation.}
As noted above, this literature addresses essentially the same issue (trade under uncertainty) and offers a conceptually simple solution: state-contingent electricity delivery combined with ex ante payments.
When this is done, the classic results for Walrasian (competitive) equilibrium in deterministic settings extend directly to environments with uncertainty. In particular, the desirable properties of current ``deterministic'' day-ahead auctions -- such as strategy-proofness in the large, budget balance, individual rationality, and efficiency -- naturally carry over to a corresponding ``stochastic'' auction design. The resulting auction design therefore provides a simple extension of current day-ahead auctions: it incorporates uncertainty while preserving their key properties and avoiding the deficiencies associated with stochastic market-clearing approaches in the existing literature.

The central question for day-ahead auction design -- one that is not addressed in the equilibrium under uncertainty literature -- is how an auctioneer can obtain a precise definition of what constitutes a ``state of the world''.
The fundamental problem in defining a state is that there are infinitely many possible candidates, while only finitely many can be selected.\endnote{The need for discretization in (electricity) markets arises primarily from computational considerations. Time, which lies in the continuous interval $[0,24]$, is typically discretized into 24 hourly or 96 quarter-hourly intervals. Similarly, continuous sources of uncertainty, such as wind speed in $\mathbb{R}$, must be discretized into a finite set of states.
In principle, agents could trade contracts for every instant $t\in[0,24]$ or for every possible realization $s\in\mathbb{R}$. In practice, however, trading at such a high level of granularity would entail substantial transaction costs, which are likely to outweigh the associated economic benefits.}
This necessitates criteria for choosing among them. We develop such criteria and show that any collection of states satisfying them corresponds to a solution of an optimal partitioning problem over the sample space of an underlying random variable.
To characterize these solutions, we draw on a central result from the optimal quantization literature~\citep[e.g.,][]{Graf2000,Pflug2014}, which shows that solutions to this partitioning problem correspond to a centroidal Voronoi partition.
This approach yields states that are both intuitive and easy to interpret, and that can be computed by solving a location problem that optimally places $S$ points in $k$-dimensional space~\citep{Eiselt2011location}.

The remainder of the article is structured as follows. In \Cref{sec: theory}, we apply the theory of equilibrium under uncertainty, as for example presented in Chapter 19 of \cite{mas1995microeconomic}, to the day-ahead electricity market. In \Cref{sec: auction design}, we discuss the implications for day-ahead auction design and incentives. In \Cref{sec: state definition}, we examine how states of the world can be defined, and in \Cref{sec: case study}, we illustrate their computation and interpretation in a case study of offshore wind farms in the European North Sea. Finally, \Cref{sec: conclusion} concludes and outlines directions for implementation of the proposed auction design.

\section{Equilibrium Under Uncertainty}\label{sec: theory}

In \Cref{subsec: model}, we introduce a general model of the power system that captures the interaction between day-ahead decisions and uncertainty, and we define the state-contingent contracts that firms can trade.
In \Cref{subsec: equilibrium}, we apply the concept of equilibrium under uncertainty to analyze how these contracts are traded and how prices emerge, showing that trade leads to welfare maximization and socially optimal day-ahead decisions.
Finally, in \Cref{app: example price formation} we provide an example.

\subsection{Model}\label{subsec: model}

Consider a power system with a set of nodes $\mathcal{N}=\{1,\ldots,N\}$ and a set of agents $\mathcal{I}=\{1,\ldots,I\}$, each of whom owns or operates a component of the system. For example, an agent $i\in\mathcal{I}$ may represent a transmission system operator, a generation company, or a consumer. Let $t\in\mathcal{T}=\{1,\ldots,T\}$ denote discrete time periods (e.g., 15-minute or hourly intervals).

\begin{figure}[tb]
\centering
\begin{tikzpicture}[x=3.2cm,y=1cm,>=stealth]

    \draw[thick,->] (0,0) -- (4.4,0) node[right] {};

    \draw[thick] (0,0.12) -- (0,-0.12);
    \draw[thick] (1.0,0.12) -- (1.0,-0.12);
    \draw[thick] (2.2,0.12) -- (2.2,-0.12);
    \draw[thick] (3.0,0.12) -- (3.0,-0.12);
    \draw[thick] (4.2,0.12) -- (4.2,-0.12);

    \node[below=4pt] at (0,0) {$t=0$};
    \node[below=4pt] at (1.0,0) {$t=t_0$};
    \node[below=4pt] at (2.2,0) {$t=t_1$};
    \node[below=4pt] at (3.0,0) {$t=1$};
    \node[below=4pt] at (4.2,0) {$t=T$};

    \node[align=center,anchor=south] at (0,0.35)
        {Trade contracts \\$x_i$ and pay $\lambda x_i$};

    \node[align=center,anchor=south] at (1.0,0.35)
        {Choice of \\ decisions $z_i$};

    \node[align=center,anchor=south] at (2.2,0.35)
        {Uncertainty $\xi$\\ and state $s$ \\is realized};

    \node[align=center,anchor=south] at (3.6,0.35)
        {Electricity delivery $x_{is}$ \\ given state $s$};

\end{tikzpicture}
\caption{Timeline of the model.}
\label{fig:timing_model}
\end{figure}
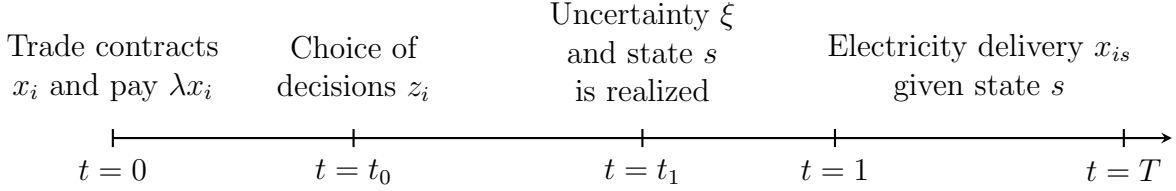

Let $z_i\in\mathbb{R}^{m_i}$ denote the decisions that agent $i$ must take at time $t=t_0$, before operations begin. These may include, for example, start-up decisions, network configurations, or production planning choices, and may involve both continuous and discrete variables.

At the time of decision-making, some parameters $\xi\in\mathbb{R}^k$ are uncertain but become realized at time $t=t_1$, before injections and withdrawals occur at $t=1$. For instance, $\xi$ may represent wind speeds at different nodes.

Before making decisions $z_i$, agents can trade contracts at $t=0$ for electricity delivery in periods $t\in\mathcal{T}$. Crucially, delivery is contingent on the realization of a state $s$: agents are required to inject or withdraw power only if state $s$ occurs. If state $s$ does not occur, no delivery takes place. 

We formalize the notion of a state as follows.

\begin{definition}[State]\label{definition: state}
Let $\xi$ be a random variable at $t=0$ with sample space $\Omega \subseteq \mathbb{R}^k$, whose realization becomes publicly known at time $t=t_1$.
A state $s$ is a non-empty subset $\Omega_s \subseteq \Omega$.
We say that state $s$ occurs if $\xi \in \Omega_s$ at time $t=t_1$.
\end{definition}

For example, let $\xi$ denote the wind speed (in m/s) at time $t=t_1$ at a given node.
If wind speed is bounded by $10$ m/s, then $\Omega=[0,10]$. A state $s$ could be defined as a singleton, e.g.\ $\Omega_s=\{1.32\}$, or as an interval, e.g.\ $\Omega_s=[1.7,3.1]$.

We formalize a state-contingent electricity contract as follows.

\begin{definition}[State-Contingent Electricity Contracts]\label{def: contracts}
Let $x_{nts}$ denote a right to withdraw (if $x_{nts}>0$) or an obligation to inject (if $x_{nts}<0$) one MW of electric power at node $n\in\mathcal{N}$ during period $t\in\mathcal{T}$, with delivery taking place if and only if state $s\in\mathcal{S}$ occurs.
Collecting all such contracts, we define
\begin{equation*}
x = (x_{nts})_{n\in\mathcal{N}, t\in\mathcal{T}, s\in\mathcal{S}} \in \mathbb{R}^{N \cdot T \cdot S}
\end{equation*}
as the vector of contracts across all nodes, periods, and states.
To acquire a portfolio $x$ at $t=0$, an agent pays (or receives)
\begin{equation*}
\lambda \cdot x = \sum_{n\in\mathcal{N}} \sum_{t\in\mathcal{T}} \sum_{s\in\mathcal{S}} \lambda_{nts} \cdot x_{nts},
\end{equation*}
where
\begin{equation*}
\lambda = (\lambda_{nts})_{n\in\mathcal{N}, t\in\mathcal{T}, s\in\mathcal{S}} \in \mathbb{R}^{N \cdot T \cdot S}
\end{equation*}
denotes the vector of contract prices.
\end{definition}

Contracts currently traded in European an US day-ahead auctions are a special case of those contracts with $S=1$ given by $\Omega_1=\Omega$.\endnote{In the current single day-ahead auction in Europe (with auction time $t=0=\text{11 am}$), contracts can be traded for $T=96$ time periods (15-minute intervals of the next day) and $N=61$ locations (bidding zones across 27 countries)~\citep[cf.][]{nemo-committee}. These contracts are not state-contingent; i.e., there is a single state $S=1$ with $\Omega_1=\{\Omega\}$.}

A key feature of these state-contingent contracts is that payments are made at $t=0$, while delivery occurs at $t\in\mathcal{T}$. In particular, although delivery is contingent on the realization of a state, payments are not~\citep{mas1995microeconomic}.

\begin{example}[State-Contingent Contracts]\label{example: contracts}
Consider a setting with $T=1$, $N=1$, and $S=2$, where the states are $\Omega_1=\text{``high wind''}$ and $\Omega_2=\text{``low wind''}$. 
Suppose the corresponding prices are $\lambda_1=10$ €/MWh and $\lambda_2=20$ €/MWh.
A wind farm with expected production of 10 MWh in state 1 and 5 MWh in state 2 can sell $x_1=10$ and $x_2=5$ units of the contracts, yielding total revenues of $10\cdot 10 + 20\cdot 5 = 200$ €.
An inflexible load of 10 MW can purchase 10 units of each contract, resulting in total payments of $10\cdot 10 + 20\cdot 10 = 300$ €.
If instead there were only a single state, $\Omega_1=\text{``high and low wind''}$, the wind farm would face the problem described in the introduction. By distinguishing between two states, this issue is resolved.
\end{example}

If state $s$ is realized, i.e., $\xi\in\Omega_s$, and an agent $i$ has bought or sold
\[
x_i = (x_{ints})_{n\in\mathcal{N},\, t\in\mathcal{T},\, s\in\mathcal{S}},
\]
then it must inject or withdraw $x_{is}$ units of electricity in that state. The resulting utility (or cost) for agent $i$ is represented by
\begin{equation*}\label{eq: utility}
u_i(z_i,x_{is},\xi): \mathbb{R}^{m_i} \times \mathbb{R}^{N\times T} \times \mathbb{R}^k \to \mathbb{R} \cup \{-\infty\},
\end{equation*}
where $u_i(z_i,x_{is},\xi)=-\infty$ indicates that the combination $(z_i,x_{is},\xi)$ of decisions, injections or withdrawals, and parameters is infeasible for agent~$i$.

\begin{remark}[Transmission System Operator]
This framework is sufficiently general to represent all agents in the power system. Generation units, storage devices, and flexible demand can all be captured through appropriate utility functions $u_i(\cdot)$. In particular, the transmission system operator (TSO) can be represented by a utility function whose domain encodes network constraints such as power flow equations and line capacity limits.
Thus, the TSO effectively buys electricity at some nodes and sells it at others, subject to the transmission grid's capacity constraints, thereby determining the extent to which state-contingent flows between nodes are permitted (maximizing congestion rent).
The TSO may trade the same quantity of contracts across all states $s$, thereby fixing flows already at $t=0$, or trade different quantities across states, allowing flows to adjust once more information about $\xi$ becomes available.
\end{remark}

\begin{remark}[Reserve Procurement]
The model can be extended to include the simultaneous trading of energy and reserves (``co-optimization''). Reserves constitute a distinct commodity that is procured by the TSO and supplied by market participants~\citep{papavasiliou2024optimization}. The TSO may procure reserve quantities for each time period $t$, location $n$, and state $s$, either uniformly across states or state-dependently. Modeling reserves as state-contingent commodities also enables variable renewable generators, such as wind and solar power plants, to provide downward reserves (curtailing production) in states with high wind or solar output.
\end{remark}

\begin{remark}[Two-Stage vs.\ Multi-Stage]
We restrict attention to uncertainty that is resolved before $t=1$, yielding a two-stage setting: first-stage decisions $z_i$ and contract positions $x_i$ are chosen at $t=0$, while consumption and production $x_{is}$ for all $t\in\mathcal{T}$ occur in the second stage after uncertainty $\xi$ is realized.  
Our analysis can be extended, without conceptual difficulty, to multi-stage settings in which uncertainty is revealed sequentially (e.g., at $t=t_2$, $t=t_3$, etc.).
\end{remark}

\subsection{Equilibrium}\label{subsec: equilibrium}

To analyze how agents trade contracts at $t=0$, we begin with the following simplifying assumption.

\begin{assumption}[Complete Contracts]\label{ass: complete market}
The sample space is finite, $\Omega=\{\xi_1,\ldots,\xi_S\}$ with $S<\infty$, and contracts are defined for each state $\Omega_s=\{\xi_s\}$, $s\in\mathcal{S}=\{1,\ldots,S\}$.
\end{assumption}

Under this assumption, no trade takes place after $t=0$ in intraday or balancing markets, since agents can fully specify their electricity consumption and production at $t=0$ for every time, location, and state~\citep{mas1995microeconomic}. This allows us to ignore subsequent markets and focus exclusively on trade at $t=0$. We discuss this assumption in detail in \Cref{subsec: incomplete day-ahead auction}.

We make a second simplifying assumption, which we discuss later in \Cref{subsec: auction} when introducing the auction design.

\begin{assumption}[Price-Taking]\label{ass: price taking}
Each agent $i\in\mathcal{I}$ assumes that its own trading decisions $x_i$ do not affect contract prices $\lambda$.
\end{assumption}

While the cost of trading a portfolio $x_i$ is certain and given by $\lambda \cdot x_i$, the utility from injecting or withdrawing electricity depends on the realization of $\xi$ and thus on the state $s$. Hence, agent $i$ must hold subjective beliefs about $\xi$, represented by probabilities $\pi_{is}$ assigned to the possible outcomes $\xi_s$ under \Cref{ass: complete market}.

If the agent maximizes expected utility, its problem is
\begin{align*}
\max_{z_i,x_{is}} \; \sum_{s\in\mathcal{S}} \pi_{is} \, u_i(z_i,x_{is},\xi_s) - \lambda \cdot x_i.
\end{align*}
More generally, the agent may evaluate uncertain outcomes using a risk measure $F_i(\cdot)$. Its problem then becomes
\begin{align*}
\max_{z_i,x_{is}} \; F_i\Big( \big(u_i(z_i, x_{is}, \xi_s)\big)_{s \in \mathcal{S}} \, ,\; (\pi_{is})_{s \in \mathcal{S}} \Big) \ - \ \lambda \cdot x_i .
\end{align*}
For example, $F_i(\cdot)$ may represent expected utility as described above, conditional value-at-risk, or the worst-case value. Note that payments $\lambda \cdot x_i$ are certain and therefore are not part of the risk evaluation captured by $F_i(\cdot)$.

\begin{example}[Trading Contracts]\label{example: trading contracts}
Consider a thermal power plant in the setting of \Cref{example: contracts}. The plant must decide at $t=0$ whether to be online at $t=1$. If it is online, it can produce between 10 and 20 MWh at a cost of 30 €/MWh. The plant believes both states are equally likely and maximizes expected profit.
Its optimization problem is thus:
\begin{align*} 
\max_{z,x} & \quad 0.5 \cdot (-30) \cdot (x_1 + x_2) + 10 \cdot x_1 + 20 \cdot x_2 \\ 
\text{s.t.} & \quad 10 \cdot z \le x_s \le 20 \cdot z \quad s=1,2 , \\
& \quad z \in \{0,1\} . 
\end{align*}
The optimal solution is $z^\ast=1$, $x_1^\ast=10$, and $x_2^\ast=20$. The plant thus sells 10 MWh under state-1 contracts and 20 MWh under state-2 contracts. 
Since payments are not state-contingent, the plant receives an upfront payment of $10\cdot10 + 20\cdot20 = 500$\,€. By contrast, delivery is state-contingent, and therefore so are production costs and total profit. If state 1 realizes, the plant must produce 10 MWh at a cost of 300\,€, yielding a profit of 200\,€. If state 2 realizes, it must produce 20 MWh at a cost of 600\,€, resulting in a loss of 100\,€. The expected profit is therefore 50\,€.

If the plant instead seeks to avoid losses at all cost, it may maximize the worst-case profit:
\begin{align*} \max_{z,x} & \quad \min \big\{-30 \cdot x_1, \; -30 \cdot x_2 \big\} + 10 \cdot x_1 + 20 \cdot x_2 \\ 
\text{s.t.} & \quad 10 \cdot z \le x_s \le 20 \cdot z \quad s=1,2 , 
\\ & \quad z \in \{0,1\} . 
\end{align*}
In this case, an optimal solution is $z^\ast=1$ and $x_1^\ast = x_2^\ast = 10$, or alternatively $z^\ast=0$ and $x_1^\ast = x_2^\ast = 0$, both yielding a worst-case profit of 0 €.
\end{example}

If there are prices $\lambda^\ast$ so that supply and demand for all contracts balance, we refer to them as (Walrasian) equilibrium prices.

\begin{definition}[Walrasian Equilibrium]\label{def: equilibrium}
An allocation $x^\ast=(x_i^\ast)_{i\in\mathcal{I}}$ and a price vector $\lambda^\ast$ constitute a Walrasian equilibrium if
\begin{align}\label{eq: agents maximisation problems}
(z_i^\ast,x_i^\ast) \in \; \argmax_{z_i,x_{is}} \; F_i\Big( \big(u_i(z_i, x_{is}, \xi_s)\big)_{s \in \mathcal{S}} \, ,\; (\pi_{is})_{s \in \mathcal{S}} \Big) - \lambda^\ast \cdot x_{i} 
\end{align}
and $\sum_{i\in\mathcal{I}} x_i^\ast = 0$. We refer to a vector $z^\ast=(z_i^\ast)_{i\in\mathcal{I}}$ as individual optimal decisions associated with equilibrium $(x^\ast,\lambda^\ast)$.
\end{definition}

We are now in a position to show that trading state-contingent contracts resolves the problem outlined in the introduction. In particular, it ensures an efficient use of renewable energy -- i.e., the allocation $x_s^\ast$ maximizes welfare across all possible states $s$ -- and yields optimal advance decisions for the system -- i.e., $z^\ast$ solves the stochastic welfare maximization problem.  
In other words, $(x^\ast,z^\ast)$ coincide with the allocation and decisions chosen by a central planner maximizing social welfare. This result is a special case of the welfare theorems~\citep{mas1995microeconomic}.

\begin{theorem}\label{theorem: equilibrium}
Let $(x^\ast,\lambda^\ast)$ be a Walrasian equilibrium and $z^\ast$ be associated individual optimal decisions. Then $(z^\ast,x^\ast)$ solves
\begin{subequations}\label{eq: central stochastic optimization}
\begin{align}
    \max_{z_i,x_{is}} \quad & \sum_{i\in\mathcal{I}} F_i\Big( \big(u_i(z_i, x_{is}, \xi_s)\big)_{s \in \mathcal{S}} \, ,\; (\pi_{is})_{s \in \mathcal{S}} \Big) \\
    \text{s.t.} \quad & \sum_{i\in\mathcal{I}} x_{i} = 0 .
\end{align}
\end{subequations}
\end{theorem}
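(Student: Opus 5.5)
The plan is the standard first-welfare-theorem argument: sum the individual optimality conditions and use market clearing to cancel the payments. Write $W_i(z_i,x_i) := F_i\big((u_i(z_i,x_{is},\xi_s))_{s\in\mathcal{S}},(\pi_{is})_{s\in\mathcal{S}}\big)$ for agent $i$'s risk-evaluated utility. By \Cref{def: equilibrium}, $(z^\ast,x^\ast)$ is feasible for \eqref{eq: central stochastic optimization}, since $\sum_i x_i^\ast=0$. It therefore suffices to show that no other feasible pair attains a strictly larger objective value.

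First, fix an arbitrary feasible pair $(z,x)$ for \eqref{eq: central stochastic optimization}, that is, one with $\sum_{i\in\mathcal{I}} x_i = 0$. Each component $(z_i,x_i)$ is an admissible choice in agent $i$'s problem \eqref{eq: agents maximisation problems}, because that problem imposes no constraints beyond those encoded in the domain of $u_i$. Optimality of $(z_i^\ast,x_i^\ast)$ at prices $\lambda^\ast$ then gives, for every $i\in\mathcal{I}$,
\begin{equation*}
W_i(z_i^\ast,x_i^\ast) - \lambda^\ast\cdot x_i^\ast \;\ge\; W_i(z_i,x_i) - \lambda^\ast\cdot x_i .
\end{equation*}

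Second, sum these inequalities over $i\in\mathcal{I}$. Payments are linear in the positions, so the payment terms aggregate to $\lambda^\ast\cdot\sum_i x_i^\ast$ on the left and $\lambda^\ast\cdot\sum_i x_i$ on the right. Both sums vanish, the first by market clearing and the second by feasibility of $(z,x)$. This leaves $\sum_i W_i(z_i^\ast,x_i^\ast)\ge \sum_i W_i(z_i,x_i)$, which is the claim. The key structural point is that payments are made ex ante and are not state-contingent, so they lie outside $F_i$ and enter additively and linearly, exactly as in \eqref{eq: agents maximisation problems}. No properties of $F_i$, such as concavity or monotonicity, are needed.

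The only delicate step is the extended-real arithmetic, because $u_i$ and hence $F_i$ may take the value $-\infty$. I would handle it as follows. If some $W_i(z_i,x_i)=-\infty$, the right-hand sum is $-\infty$ and the inequality holds trivially. Otherwise all right-hand terms are finite, and each agent's optimality forces $W_i(z_i^\ast,x_i^\ast)>-\infty$, so all terms on both sides are finite. The summation is then legitimate, provided we take $F_i<+\infty$ at the optimum, which holds whenever each agent's problem attains a maximum as the definition presumes.
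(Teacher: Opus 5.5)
Your proposal is correct and is essentially the paper's argument. The paper notes that $(z^\ast,x^\ast)$ maximizes the separable aggregate objective $\sum_i F_i(\cdot) - \lambda^\ast\cdot\sum_i x_i$, hence also maximizes it over the subset where $\sum_i x_i=0$ (on which it is feasible), and there the payment term vanishes; this is exactly your summed-inequalities comparison, and your explicit handling of $-\infty$ values is a harmless refinement the paper omits.
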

\begin{proof}
Since $(z^\ast,x^\ast)$ solves the individual optimization problems~\eqref{eq: agents maximisation problems}, it also solves the aggregate:
\begin{align*}
\max_{z,x} \quad
\sum_{i\in\mathcal{I}} F_i\Big( \big(u_i(z_i, x_{is}, \xi_s)\big)_{s \in \mathcal{S}} \, ,\; (\pi_{is})_{s \in \mathcal{S}} \Big) - \sum_{i\in\mathcal{I}} \lambda^\ast \cdot x_{i}.
\end{align*}
Because $x^\ast$ is an equilibrium allocation it holds that $\sum_{i\in\mathcal{I}} x_{i}^\ast = 0$. Hence, $(z^\ast,x^\ast)$ must solve
\begin{align*}
 \max_{z,x} \quad &
 \sum_{i\in\mathcal{I}} F_i\Big( \big(u_i(z_i, x_{is}, \xi_s)\big)_{s \in \mathcal{S}} \, ,\; (\pi_{is})_{s \in \mathcal{S}} \Big)
 - \sum_{s\in\mathcal{S}} \lambda^\ast_s \cdot x_{is} \\
 \text{s.t.} \quad &
 \sum_{i\in\mathcal{I}} x_{i} = 0.
\end{align*}
Finally, $ \sum_{i\in\mathcal{I}} x_{i} = 0$ implies $\sum_{i\in\mathcal{I}} \sum_{s\in\mathcal{S}} \lambda^\ast_s \cdot x_{is} = 0$.
\end{proof}

\begin{remark}[Expected Social Welfare]
In the special case in which all agents maximize expected utility and share a common probability assessment, i.e., $\pi_{is}=\pi_s$ for all $i\in\mathcal{I}$ and $s\in\mathcal{S}$, the welfare maximization problem~\eqref{eq: central stochastic optimization} becomes
\begin{subequations}
\begin{align*}\label{eq: expected social welfare}
    \max_{z_i,x_{is}} \quad 
    & \sum_{s\in\mathcal{S}} \pi_s \sum_{i\in\mathcal{I}} u_i(z_i,x_{is},\xi_s) \\
    \text{s.t.} \quad 
    & \sum_{i\in\mathcal{I}} x_{is} = 0 \quad \forall s\in\mathcal{S}.
\end{align*}
\end{subequations}
\end{remark}

Equilibrium prices $\lambda_s^\ast$ reflect three factors: (i) agents’ probability assessments $(\pi_{is})_{s \in \mathcal{S}}$, (ii) their risk preferences $F_i(\cdot)$, and (iii) their utility functions $u_i(\cdot)$.  
If a state is unlikely, agents are typically unwilling to pay much for contracts that deliver in that state. However, if electricity is scarce in that state,  as reflected in the utility functions $u_i(\cdot)$, prices will be higher.  
Moreover, risk-averse agents may still be willing to pay substantial amounts for contracts in unlikely states if those states are associated with high scarcity.  
An illustrative example of this price formation is provided next.

\subsection{Illustration of Price Formation}\label{app: example price formation}

Let $T=1$, $N=1$, and $S=2$, with states $\Omega_1=\text{``high wind''}$ and $\Omega_2=\text{``low wind''}$.  
There are three agents in the system: First, a wind farm that can produce 10 MWh in state 1 and 5 MWh in state 2 at zero cost. Second, a load that demands up to 11 MWh in each state and is willing to pay 100 €/MWh. Third, a generator that can produce between 0 and 5 MWh at a cost of 50 €/MWh, but must decide its production level in advance.

All agents maximize expected utility and share common probability assessments $\pi_1$ and $\pi_2$. Production is represented by negative quantities and consumption by positive quantities. The expected welfare maximization problem~\eqref{eq: central stochastic optimization} is thus:
\begin{subequations}
\begin{align*}
    \max_{z_i,x_{is}} \quad 
    & \sum_{s=1}^2 \pi_s \cdot (100 x_{2s} + 50 x_{3s}) \\
    \text{s.t.} \quad 
    & x_{1s} + x_{2s} + x_{3s} = 0 \quad s=1,2, \\
    & -10 \le x_{11} \le 0, \\
    & -5 \le x_{12} \le 0, \\
    & 0 \le x_{2s} \le 11 \quad s=1,2, \\
    & x_{3s} = z_3 \quad s=1,2, \\
    & -5 \le z_3 \le 0.
\end{align*}
\end{subequations}

It is straightforward to verify that the duality gap of this welfare maximization problem is zero, since convexity and Slater’s condition are satisfied. It follows that an equilibrium exists in this market~\citep{mas1995microeconomic,hubner2025approximate}.
\Cref{tab:primal_balance_duals} reports the resulting equilibrium allocations $x^\ast$ and prices $\lambda^\ast$ for different probability values $\pi_1$ and $\pi_2$.  
If $\pi_1 < \pi_2$, it is optimal to operate the generator at full capacity (5 MWh). If $\pi_1 > \pi_2$, the generator operates at a lower level (1 MWh).

\begin{table}[tb]
\centering
\small
\begin{tabular}{rrrrrrrrr}
\hline
$\pi_1$ & $\pi_2$ & $x_{11}$ & $x_{12}$ & $x_{21}$ & $x_{22}$ & $z_3$ & $\lambda_1$ & $\lambda_2$ \\
\hline
0 & 1 & 0 & -5 & 5 & 10 & -5 & 0 & 100 \\
0.1 & 0.9 & -6 & -5 & 11 & 10 & -5 & 0 & 90 \\
0.2 & 0.8 & -6 & -5 & 11 & 10 & -5 & 0 & 80 \\
0.3 & 0.7 & -6 & -5 & 11 & 10 & -5 & 0 & 70 \\
0.4 & 0.6 & -6 & -5 & 11 & 10 & -5 & 0 & 60 \\
0.5 & 0.5 & -6 & -5 & 11 & 10 & -5 & 0 & 50 \\
0.6 & 0.4 & -10 & -5 & 11 & 6 & -1 & 10 & 40 \\
0.7 & 0.3 & -10 & -5 & 11 & 6 & -1 & 20 & 30 \\
0.8 & 0.2 & -10 & -5 & 11 & 6 & -1 & 30 & 20 \\
0.9 & 0.1 & -10 & -5 & 11 & 6 & -1 & 40 & 10 \\
1 & 0 & -10 & -5 & 11 & 6 & -1 & 50 & 0 \\
\hline
\end{tabular}
\caption{Allocations and prices for different probability assessments.}
\label{tab:primal_balance_duals}
\end{table}

The prices $\lambda_1$ and $\lambda_2$ reflect both the scarcity of electricity in each state and the likelihood of that state.  
In state 1 (high wind), electricity is abundant, resulting in low prices. As the probability of this state increases, however, prices rise because agents place greater value on contracts delivering in that state (``law of supply and demand'').  
In state 2 (low wind), electricity is scarce, leading to higher prices. As this state becomes less likely, agents are less willing to pay for delivery in that state, and prices decline accordingly.

\section{Auction Design}\label{sec: auction design}

\Cref{theorem: equilibrium} shows that if agents trade state-contingent contracts at $t=0$ and reach a Walrasian equilibrium, then socially optimal decisions $z_i^\ast$ are implemented. Moreover, an allocation $x_s^\ast$ is determined for each state $s$ that maximizes welfare given agents’ risk preferences and probability assessments.

In practice, reaching such an equilibrium requires a centralized auction, as bilateral trading is generally considered inefficient given the complex constraints of electricity systems~\citep{milgrom2017discovering}.\endnote{Australia, which relies on bilateral trading at the day-ahead stage, can be seen as an exception~\citep{anderson2007forward}.}  
Current day-ahead electricity auctions in Europe and the United States perform this task for the special case of a single state~\citep{Cramton2017}. We now outline how this design can be extended to multiple states.

In \Cref{subsec: auction}, we describe the auction design and its key properties.  
In \Cref{subsec: incomplete day-ahead auction}, we discuss the role of the complete market assumption (\Cref{ass: complete market}) and its limitations in practice.

\subsection{Auction Properties}\label{subsec: auction}

Consider an auction at time $t=0$ in which agents $i\in\mathcal{I}$ communicate their willingness to buy or sell the state-contingent contracts defined in \Cref{def: contracts} via bid functions $\hat{u}_i$.  
The auctioneer determines an allocation $x_i^\ast$ of contracts and prices $\lambda^\ast$ such that $(x^\ast,\lambda^\ast)$ constitute a Walrasian equilibrium with respect to the reported valuations $\hat{u}_i$.  
Formally, the auction operates as follows:

\vspace{0.5cm}

\noindent
\begin{minipage}{\linewidth}
\hrule
\vspace{0.3cm}

\noindent \textbf{Bids:} Each agent $i\in\mathcal{I}$ submits a function
\[
\hat{u}_i: \mathbb{R}^{S\times N\times T}\to\mathbb{R} \cup \{-\infty\}
\]
that assigns to every vector $x_i\in\mathbb{R}^{S\times N\times T}$ of state-contingent contracts a monetary value (e.g., in €), representing the amount the agent is willing to pay or receive.

\vspace{0.2cm}
\noindent \textbf{Allocation \& Prices:} Given the reported functions $\hat{u}_i$, the auctioneer determines an allocation $x^\ast=(x_1^\ast,\ldots,x_I^\ast)$ and prices $\lambda^\ast=(\lambda^\ast_{nts})_{n\in\mathcal{N},\, t\in\mathcal{T},\, s\in\mathcal{S}}$ such that
$$x_i^\ast \in \argmax_{x_{i}} \; \hat{u}_i(x_i) - \lambda^\ast x_i,$$ and $\sum_{i\in\mathcal{I}} x_i^\ast = 0$.

\vspace{0.2cm}
\hrule
\end{minipage}
\vspace{0.2cm}

Let us assume that agents report their true valuations for state-contingent contracts, given by
\begin{equation}\label{eq: true valuation}
    \hat{u}_i(x_i) = \max_{z_i} \; F_i\Big( \big(u_i(z_i, x_{is}, \xi_s)\big)_{s \in \mathcal{S}} \, ,\; (\pi_{is})_{s \in \mathcal{S}} \Big).
\end{equation}
Suppose further that an equilibrium exists. Then \Cref{theorem: equilibrium} implies that the auction maximizes social welfare and induces optimal decisions $z_i^\ast$.  

This raises several natural questions:
\begin{enumerate}[label=(\roman*)]
    \item \textit{Incentives:} Do agents have an incentive to report their true valuations?
    \item \textit{Individual Rationality:} Do agents benefit from participating in the auction?
    \item \textit{Revenue Adequacy:} Does the auctioneer break even?
    \item \textit{Equilibrium Existence:} What if an equilibrium does not exist?
    \item \textit{Bid Formats:} Can agents express their valuations in a computationally tractable way?
\end{enumerate}

Since this is a standard Walrasian auction, these questions are well understood. Moreover, the insights for $S=1$ carry over directly to the case $S>1$. Next, we review these results and discuss their implications for extending current day-ahead auctions from $S=1$ to $S>1$.

\paragraph{Incentives.}
If the auction has sufficiently many participants, agents have an incentive to report their true valuations. This property, known as \emph{strategy-proofness in the large} (cf.\ Theorem 1 in \cite{azevedo2019strategy}), can be informally stated as follows:

\begin{fact}[Strategy-proofness in the large]
If the auction has many participants, then bidding~\eqref{eq: true valuation} is approximately a dominant strategy for each agent $i\in\mathcal{I}$.
\end{fact}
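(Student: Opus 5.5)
The plan is to reduce the statement to Theorem~1 of \cite{azevedo2019strategy} by checking that the state-contingent auction is a \emph{semi-anonymous} mechanism in their sense. Semi-anonymity means the outcome map treats agents of the same type symmetrically. Strategy-proofness in the large then follows for such mechanisms if and only if they are \emph{envy-free but for tiebreaking}. The whole argument therefore consists of verifying these two hypotheses for the auction above. A key observation is that nothing changes when $S$ grows from $1$ to $S>1$: a state-contingent contract is just another commodity, namely a coordinate of $\mathbb{R}^{N\cdot T\cdot S}$. I will therefore write the argument for a general finite commodity space.

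\textbf{Step 1 (semi-anonymity).} I model the auction as a mechanism over a finite set of reports $\hat u_i$, discretized at any fixed precision, together with finitely many agent groups, for example the TSO and each class of generators or loads. The clearing rule maps the empirical distribution of reports to $(x^\ast,\lambda^\ast)$. It depends on identities only through the reported functions, because the conditions $x_i^\ast\in\arg\max_{x_i}\hat u_i(x_i)-\lambda^\ast x_i$ and $\sum_i x_i^\ast=0$ are symmetric in $i$. Ties among multiple equilibria, or among multiple maximizers of a given $\hat u_i$, are resolved by an anonymous lottery. This makes the mechanism semi-anonymous.

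\textbf{Step 2 (envy-freeness).} The check is immediate. Every agent faces the same linear price $\lambda^\ast$ and may be assigned any bundle in its demand set. Consider agent $i$ with true valuation~\eqref{eq: true valuation} and any bundle $x_j^\ast$ assigned to another agent $j$ at price $\lambda^\ast x_j^\ast$. Since $x_i^\ast$ maximizes $\hat u_i-\lambda^\ast x_i$ over all of $\mathbb{R}^{N\cdot T\cdot S}$, we get
\[
\hat u_i(x_i^\ast)-\lambda^\ast x_i^\ast\ \ge\ \hat u_i(x_j^\ast)-\lambda^\ast x_j^\ast .
\]
No agent envies another's allocation–payment pair, apart from ties. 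Steps 1 and 2 together let us invoke Theorem~1 of \cite{azevedo2019strategy}. For any full-support i.i.d.\ distribution of opponents' types, the gain from misreporting vanishes as the number of participants grows. That is exactly the informal statement: bidding~\eqref{eq: true valuation} is approximately dominant.

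\textbf{Main obstacle.} The hardest part is fitting the electricity setting into the finite framework of Azevedo and Budish. Their theorem needs finitely many types and outcomes, and it needs the mechanism to be well defined, so an equilibrium must exist at every report profile. Here valuations are functions on a continuum, and equilibria may fail to exist under nonconvexities such as the binary $z_i$ in Example~2. I would first restrict to a finite grid of quantities and a finite menu of bid types, which matches the discretized bid formats used in practice. I would then handle nonexistence by replacing exact clearing with the approximate-equilibrium rule of \cite{hubner2025approximate}, arguing that it stays envy-free up to a vanishing error as the market grows. Because of these modelling choices the result is stated only as an informal Fact rather than a theorem. Rigor beyond this reduction would require a careful treatment of the limit in which the market is replicated.
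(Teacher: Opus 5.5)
You follow the paper's route. The paper gives nothing beyond the pointer to Theorem~1 of \cite{azevedo2019strategy}, and your Step~2 is the right reason a Walrasian auction falls under that theorem: all agents face the same linear prices $\lambda^\ast$ and each receives a bundle in its own reported demand set, so no one prefers another agent's bundle--payment pair.

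The more rigorous reduction you sketch does not close, though, and the failure is sharper than your last paragraph suggests. Steps~1 and~2 are verified for different mechanisms. Theorem~1 needs a fixed finite outcome set, so you put quantities on a grid (and, for payments to range over a finite set, effectively prices too). On a grid, however, an exact Walrasian equilibrium can fail to exist even when every $\hat u_i$ is concave. Indivisibility combined with complementarities across commodities breaks existence, and first-stage decisions $z_i$ readily create such complementarities across states. So the exact-clearing rule whose envy-freeness Step~2 establishes is not a well-defined mechanism on the finite model of Step~1.

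Your fallback does not supply the missing hypothesis either. Theorem~1 is stated for mechanisms that are envy-free but for tie-breaking at every market size. In the approximate-clearing rules used for nonconvex markets, departures from demand are typically concentrated on a bounded number of agents (a Shapley--Folkman effect). Each such agent bears a loss that does not shrink with $n$, and these agents are picked by the clearing optimization, not by the tie-breaking lottery. At best each agent's probability of being one of them vanishes, so only interim expected envy tends to zero. To use that, you need one of two things: a rule whose exceptions are generated by the lottery in the way the theorem permits, or an approximate or limit version of the theorem whose hypotheses you then verify. ``Envy-free up to a vanishing error'' is not the hypothesis of Theorem~1 as stated.

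Two smaller points. First, what the theorem delivers is interim $\varepsilon$-optimality of truthful bidding when opponents' types are i.i.d.\ from a fixed full-support distribution, with the threshold market size depending on that distribution. This is weaker than approximate dominance, since at particular profiles an agent can remain pivotal. So it is not ``exactly'' the informal statement, even though the paper's wording makes the same identification. Second, listing the TSO as one group is not innocuous. A single TSO whose network scales with the market is a large agent about which the theorem says nothing. The network constraints should instead be folded into the clearing rule, and ``each agent $i$'' read as each small agent.
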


In particular, agents are incentivized to truthfully report (i) their risk preferences $F_i(\cdot)$, (ii) their probability assessments $(\pi_{is})_{s \in \mathcal{S}}$, and (iii) their utility functions $u_i(\cdot)$.  
For further discussion, see \cite{azevedo2019strategy} for theory, and \cite{graf2013measuring}, \cite{graf2021market}, and \cite{adelowo2024redesigning} for applications to electricity markets.

\paragraph{Individual rationality.}
Walrasian auctions are \textit{individually rational}, meaning that agents cannot incur a loss from participating~\citep{mas1995microeconomic}.

\begin{fact}[Individual Rationality]
If agents truthfully report their valuations~\eqref{eq: true valuation}, then participation does not lead to losses: for any allocation $x^\ast$ and prices $\lambda^\ast$,
\begin{equation*}
 \max_{z_i} \; F_i\Big( \big(u_i(z_i, x_{is}^\ast, \xi_s)\big)_{s \in \mathcal{S}} \, ,\; (\pi_{is})_{s \in \mathcal{S}} \Big) \ - \ \lambda^\ast \cdot x_i^\ast \; \ge 0,
\end{equation*}
provided that $u_i(z_i,0,\xi_s)=0$ for all $z_i$ and $\xi_s$.
\end{fact}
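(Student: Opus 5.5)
The plan is a revealed-preference comparison with the option of not trading. Since $x_i^\ast$ is chosen by the auctioneer to maximize agent $i$'s reported surplus at equilibrium prices $\lambda^\ast$, and reports are truthful, we have
\[
x_i^\ast \in \argmax_{x_i} \; \hat{u}_i(x_i) - \lambda^\ast \cdot x_i ,
\qquad
\hat{u}_i(x_i) = \max_{z_i} F_i\Big( \big(u_i(z_i, x_{is}, \xi_s)\big)_{s \in \mathcal{S}} , (\pi_{is})_{s \in \mathcal{S}} \Big).
\]
The zero portfolio $x_i = 0$ is always a feasible alternative in this maximization. It yields payment $\lambda^\ast \cdot 0 = 0$. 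Hence
\[
\hat{u}_i(x_i^\ast) - \lambda^\ast \cdot x_i^\ast \;\ge\; \hat{u}_i(0).
\]
The left-hand side is exactly the expression in the Fact. So the claim reduces to showing $\hat{u}_i(0) \ge 0$.

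To evaluate $\hat{u}_i(0)$, use the hypothesis $u_i(z_i,0,\xi_s)=0$ for all $z_i$ and $s$. This gives
\[
\hat{u}_i(0) = \max_{z_i} F_i\big( (0)_{s\in\mathcal{S}}, (\pi_{is})_{s\in\mathcal{S}} \big).
\]
The remaining step, and the only real obstacle, is that the Fact implicitly needs $F_i(\mathbf{0},\pi_i)\ge 0$. I would make explicit a normalization of the risk measure: $F_i$ maps the constant zero outcome vector to $0$. This holds for every example named in the paper.
\begin{itemize}
\item For expected utility, $\sum_s \pi_{is}\cdot 0 = 0$.
\item For CVaR, the CVaR of the constant $0$ is $0$.
\item For the worst-case value, $\min_s 0 = 0$.
\end{itemize}
More generally it holds for any translation-invariant, normalized (monetary) risk measure. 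With this normalization, $\hat{u}_i(0)=0$.

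Chaining the inequalities gives the stated bound for every agent $i$. Two interpretive remarks belong in the write-up. First, "for any allocation $x^\ast$ and prices $\lambda^\ast$" should be read as "for any equilibrium outcome of the auction." The inequality relies on $x_i^\ast$ being a best response to $\lambda^\ast$ and fails for arbitrary $(x^\ast,\lambda^\ast)$. Second, the argument does not use existence of equilibrium beyond assuming the auction outputs one, and it does not use price-taking beyond its role in defining the auction outcome.
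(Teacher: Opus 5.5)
Your proof is correct, and it is the standard no-trade argument behind the Walrasian individual-rationality property that the paper states without proof, citing Mas-Colell et al.: the zero portfolio is always available in the auctioneer's best-response condition and gives $\hat{u}_i(0)$ at zero cost. Your two caveats are also right --- the statement tacitly needs the normalization $F_i(\mathbf{0},\pi_i)\ge 0$ (true for expectation, CVaR and worst case), and ``any allocation $x^\ast$ and prices $\lambda^\ast$'' must be read as any auction outcome, in which $x_i^\ast$ is a best response to $\lambda^\ast$.
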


Note that individual rationality does not imply that an agent cannot incur a loss in a particular state $s$. Rather, it is up to the agent to decide whether to accept such risk or to avoid it. For instance, an agent may choose to maximize expected value or adopt a worst-case criterion. If the agent wishes to avoid losses in every state, it can do so by choosing an appropriate objective. See \Cref{example: trading contracts} for an illustration.

\paragraph{Revenue adequacy.}
Walrasian auctions are \textit{budget balanced}, meaning that the auctioneer neither incurs a loss nor earns a profit~\citep{mas1995microeconomic}.

\begin{fact}[Budget Balance]
For the allocation $x^\ast$ and prices $\lambda^\ast$, net payments to the auctioneer are zero: $\lambda^\ast \cdot x^\ast = 0$.
\end{fact}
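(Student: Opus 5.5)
The budget-balance claim follows from the market-clearing condition in \Cref{def: equilibrium} together with linearity of the payment rule in \Cref{def: contracts}. The plan is first to make precise what ``net payments to the auctioneer'' means. Each agent $i$ pays $\lambda^\ast \cdot x_i^\ast$ at $t=0$, which may be positive (a net buyer) or negative (a net seller, who receives money). The auctioneer's net position is therefore the aggregate
\[
\sum_{i\in\mathcal{I}} \lambda^\ast \cdot x_i^\ast = \sum_{i\in\mathcal{I}} \sum_{n\in\mathcal{N}} \sum_{t\in\mathcal{T}} \sum_{s\in\mathcal{S}} \lambda^\ast_{nts}\, x^\ast_{ints},
\]
and I interpret the statement's $\lambda^\ast \cdot x^\ast$ as this sum. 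Payments are made only at $t=0$ and are not state-contingent, so no expectation over states is involved.

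Next, all sums are finite, so I can exchange the order of summation and move the sum over agents inside:
\[
\sum_{n,t,s} \lambda^\ast_{nts} \Big( \sum_{i\in\mathcal{I}} x^\ast_{ints} \Big).
\]
The allocation returned by the auction satisfies $\sum_{i\in\mathcal{I}} x_i^\ast = 0$ componentwise, i.e.\ $\sum_{i} x^\ast_{ints}=0$ for every node $n$, period $t$ and state $s$. Hence every inner bracket vanishes and the total is zero. This is exactly the final step in the proof of \Cref{theorem: equilibrium}, now applied to the realized auction outcome rather than to a candidate solution of the planner's problem.

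There is no real obstacle here. The only point needing care is bookkeeping. Contracts clear separately for each $(n,t,s)$, and the price vector is common to all agents, with no agent-specific prices or fees. That is what allows $\lambda^\ast$ to be factored out of the sum over $i$. I would also remark that the argument does not depend on truthful bidding, on existence issues beyond having found a clearing pair $(x^\ast,\lambda^\ast)$, or on the number of states $S$. In particular, the $S=1$ budget balance of current day-ahead auctions carries over unchanged to $S>1$.
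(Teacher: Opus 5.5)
Your proof is correct and matches the paper's argument. The paper states budget balance as a standard fact, citing Mas-Colell et al., and the only reasoning it gives is the one-line consequence of market clearing $\sum_{i\in\mathcal{I}} x_i^\ast = 0$ used in the final step of the proof of \Cref{theorem: equilibrium}, which is exactly your exchange-of-finite-sums argument after reading $\lambda^\ast \cdot x^\ast$ as $\sum_{i\in\mathcal{I}} \lambda^\ast \cdot x_i^\ast$.
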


\paragraph{Equilibrium existence.}
Walrasian equilibria need not exist in markets with nonconvex preferences~\citep{mas1995microeconomic}, as is the case in electricity markets. In practice, auctioneers compute approximate equilibria using established methods~\citep{stevens2024some,hubner2025approximate}.  

Recent results show that many desirable properties of Walrasian auctions can be preserved approximately in large markets~\citep{milgrom2025walrasian}:

\begin{fact}[Approximating a Walrasian Auction]
In nonconvex markets, approximate Walrasian equilibria exist that retain many desirable properties, provided the market is sufficiently large.
\end{fact}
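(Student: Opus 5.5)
The statement is informal and rests on \cite{milgrom2025walrasian}, so the plan is to reduce it to the convex case via a convexification argument and then check which properties survive.

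\textbf{Step 1: convexify and solve.} Replace each reported valuation $\hat{u}_i$ by its concave hull $\operatorname{conv}\hat{u}_i$. This is the smallest concave function majorizing $\hat{u}_i$ on $\mathbb{R}^{S\times N\times T}$. The convexified economy satisfies the convexity hypotheses under which, as in \Cref{app: example price formation}, the welfare problem \eqref{eq: central stochastic optimization} has zero duality gap. So a Walrasian equilibrium $(\tilde{x},\lambda^\ast)$ of the convexified economy exists, with prices $\lambda^\ast$ given by the dual multipliers of $\sum_i x_i=0$.

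\textbf{Step 2: round back to the true preferences.} The allocation $\tilde{x}$ may not be optimal for the true nonconvex $\hat{u}_i$. Since $\operatorname{conv}\hat{u}_i$ and $\hat{u}_i$ share the same supporting hyperplanes at their maxima, each agent's demand correspondence under $\hat{u}_i$ at $\lambda^\ast$ has convex hull equal to its convexified demand. Aggregate demand of the convexified economy is the Minkowski sum of these hulls. By the Shapley--Folkman lemma, the point $0$ in that sum can be written using at most $\dim = S\cdot N\cdot T$ agents at non-extreme points of their hulls. Every other agent receives an exact best response $x_i^\ast$. The at most $SNT$ exceptional agents are reassigned to nearby points of their true demand sets, which leaves an aggregate imbalance bounded by $SNT\cdot\max_i \mathrm{diam}(\text{demand}_i)$. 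This imbalance is independent of $I$.

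\textbf{Step 3: show the properties hold approximately.} Each desirable property then holds with an error term that vanishes as the market grows.
\begin{itemize}
\item \emph{Market clearing per agent.} Dividing the imbalance by $I$, or scaling agents as in a replica economy, gives clearing error per agent of order $O(SNT/I)$.
\item \emph{Approximate individual rationality.} All but boundedly many agents optimize exactly at $\lambda^\ast$, and the rest lose at most the bounded nonconvexity gap $\sup(\operatorname{conv}\hat{u}_i-\hat{u}_i)$.
\item \emph{Approximate efficiency.} The argument of \Cref{theorem: equilibrium} applies, with welfare loss bounded by the sum of at most $SNT$ nonconvexity gaps. This is negligible relative to total welfare, which grows like $I$.
\item \emph{Budget balance.} This holds up to $\lambda^\ast$ times the residual imbalance.
\item \emph{Incentives.} Strategy-proofness in the large follows as in \cite{azevedo2019strategy}, because each agent's influence on $\lambda^\ast$ vanishes.
\end{itemize}

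\textbf{Main obstacle.} The hardest step will be making ``sufficiently large'' precise. The Shapley--Folkman bound scales with $SNT$, so adding states increases the number of nonconvex agents that must be absorbed. Step 3 therefore needs a uniform bound on the nonconvexity gaps and demand diameters, for example bounded capacities and bounded start-up costs, with $I$ growing faster than $SNT$. I would first impose uniform boundedness of the primitives $u_i$ and $F_i$ (taking $F_i$ monotone and Lipschitz so that convexification interacts well with the risk measure), and then state the result for a replica economy.
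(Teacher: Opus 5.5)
\textbf{Gap: Step 3 switches between two incompatible allocations.} The paper gives no argument of its own for this fact; it rests entirely on \cite{milgrom2025walrasian}, whose device is a Walrasian mechanism with markups. Your Steps 1--2 are instead the classical Shapley--Folkman--Starr approximate-equilibrium construction. They are fine modulo routine conditions: a Slater-type qualification and closed concave hulls, needed for existence in Step 1 and for the identity between the hull of true demand and convexified demand.

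The first allocation is the one Step 2 actually produces. Every agent, including the at most $SNT$ exceptional ones, sits at an exact best response to $\lambda^\ast$. Individual rationality then holds exactly, and welfare satisfies $\sum_i \hat u_i(x_i^\ast)=W^{\mathrm{conv}}+\lambda^\ast\cdot e$ with $e=\sum_i x_i^\ast$. But $e\neq 0$, so $\sum_i x_i=0$ fails.

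Your IR and efficiency bullets use the second allocation. They charge exceptional agents $\sup(\operatorname{conv}\hat u_i-\hat u_i)$, which is the right accounting only if those agents stay at their convexified points $\tilde x_i$ so that the market clears. For the nonconvexity that motivates the paper, this bound is infinite. Take minimum stable output as in \Cref{example: trading contracts}: the feasible positions are $(x_1,x_2)\in\{(0,0)\}\cup[10,20]^2$, and a convexified point such as $(5,5)$ lies where $\hat u_i=-\infty$. The gap is therefore $+\infty$, and bounding capacities and start-up costs does not make it finite. As written, you get either exact optimization with an unabsorbed imbalance, or exact clearing with unbounded losses, never the combination your bullets assert.

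\textbf{Why the imbalance matters, and what is missing.} In this model $e$ is a physical mismatch of injections and withdrawals at some node, period and state. Under \Cref{ass: complete market} there is no later market to absorb it. Budget balance then fails not just by $\lambda^\ast\cdot e$ but by whatever covering $e$ costs. Since $\|e\|$ may be as large as $SNT\cdot\max_i\mathrm{diam}$, dividing by $I$ hides the problem rather than removing it.

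The missing idea is what the cited result supplies: a wedge between buyer and seller prices, with buyers paying the market price plus a markup and sellers receiving it minus one. That wedge lets the auctioneer pick an exactly feasible allocation in which every agent receives a bundle optimal at its own marked-up prices. Markup revenue keeps the auctioneer out of deficit. The cost of nonconvexity then shows up as a bid--ask spread and a welfare loss bounded independently of $I$, not as a residual imbalance.

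If you want to keep your route, state the Starr version consistently: exact IR and price-taking, welfare $W^{\mathrm{conv}}+\lambda^\ast\cdot e$, and clearing only up to $e$. Be explicit that ``approximate'' then refers to market clearing, which is weaker than what the paper's citation delivers.
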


In practice, existing approximation methods used for $S=1$ can be extended to $S>1$. However, computational complexity increases with the number of states.

\paragraph{Bid formats.}
Ideally, agents could submit arbitrary valuation functions $\hat{u}_i$. In practice, however, bid formats must ensure computational tractability of the welfare maximization problem.

Day-ahead auctions in the U.S.\ use multi-part bids that allow thermal generators and storage units to express valuations of the form~\eqref{eq: true valuation} accurately for $S=1$~\citep{herrero2020evolving}. Extending these formats to $S>1$ is straightforward.  
For many other agents -- including wind and solar generators -- simpler bid formats are available and can likewise be extended to multiple states. 

Moreover, for agents whose valuations cannot be fully captured by these formats, XOR package bids can approximate them well given partial information about equilibrium prices $\lambda^\ast$~\citep{hubner2026package}:

\begin{fact}[XOR Bids]
If an agent has sufficient, though not perfect, information about equilibrium prices $\lambda^\ast$, they can approximate any valuation~\eqref{eq: true valuation} using a limited number of XOR package bids.
\end{fact}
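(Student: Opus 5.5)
The plan is to make the informal statement precise through the agent's \emph{indirect utility} (profit function) $V_i(\lambda) = \max_{x_i} \hat{u}_i(x_i) - \lambda\cdot x_i$, with $\hat{u}_i$ the true valuation~\eqref{eq: true valuation}. In the auction of \Cref{subsec: auction}, an agent's valuation matters only through its demand at the clearing prices. So it suffices to build an XOR bid $\tilde u_i$ whose profit function $\tilde V_i(\lambda) = \max_{x_i}\tilde u_i(x_i)-\lambda\cdot x_i$ is uniformly close to $V_i$ on the set of prices the agent considers possible. I formalize ``sufficient, though not perfect, information'' as the agent knowing a compact set $\Lambda\subset\mathbb{R}^{S\cdot N\cdot T}$ with $\lambda^\ast\in\Lambda$. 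I also assume the agent's feasible bundles are bounded, $\normtwo{x_i}\le M$ whenever $\hat u_i(x_i)>-\infty$, which holds for physical capacity limits. An XOR bid is a finite list of packages $(x^k,\hat u_i(x^k))_{k=1}^K$ plus the null package $0$. It defines $\tilde u_i(x)=\max\{\hat u_i(x^k): x^k = x\}$, and $\tilde u_i(x)=-\infty$ if $x$ is not listed.

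\textbf{Construction.} Fix $\delta>0$ and a finite $\delta$-net $\{\lambda^1,\ldots,\lambda^K\}$ of $\Lambda$. For each $k$, let $x^k\in\argmax_{x}\hat u_i(x)-\lambda^k\cdot x$ be the agent's true demand at the hypothetical price $\lambda^k$; computing it requires only solving the agent's own problem~\eqref{eq: true valuation} at a fixed price. The agent then submits these $K$ packages as an XOR bid.

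\textbf{Approximation bound.} For any $\lambda\in\Lambda$, pick $k$ with $\normtwo{\lambda-\lambda^k}\le\delta$. Since $\tilde u_i\le \hat u_i$, we have $\tilde V_i\le V_i$. Conversely,
\begin{align*}
\tilde V_i(\lambda)\ \ge\ \hat u_i(x^k)-\lambda\cdot x^k \ \ge\ \hat u_i(x^k)-\lambda^k\cdot x^k-\delta M \ =\ V_i(\lambda^k)-\delta M \ \ge\ V_i(\lambda)-2\delta M.
\end{align*}
The last inequality uses that $V_i$ is convex and $M$-Lipschitz, because it is a supremum of affine functions of $\lambda$ with slopes $-x$ satisfying $\normtwo{x}\le M$. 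Hence $\sup_{\lambda\in\Lambda}|V_i(\lambda)-\tilde V_i(\lambda)|\le 2\delta M$. Any bundle the XOR bid demands at $\lambda^\ast$ is therefore a $2\delta M$-optimal response under the true valuation. Consequently, a Walrasian equilibrium with respect to the XOR bids is an approximate equilibrium, within $2\delta M$ per agent, for the true valuations. By the argument of \Cref{theorem: equilibrium}, it is then approximately welfare maximizing. The number of bids is $K=\mathcal{N}(\Lambda,\delta)$, the $\delta$-covering number of $\Lambda$. This number is small precisely when the price information is sharp, i.e.\ when $\Lambda$ has small diameter or lies near a low-dimensional set.

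\textbf{Main obstacle.} The hard step is making ``limited number'' honest. In general $\mathcal{N}(\Lambda,\delta)$ scales like $(\mathrm{diam}\,\Lambda/\delta)^{S\cdot N\cdot T}$, which is exponential in the number of contracts and grows with $S$. I would first exploit structure: for piecewise-linear or polyhedral valuations, $V_i$ has finitely many linearity regions, and one package per region meeting $\Lambda$ gives an exact representation. Otherwise I would cover only the relevant low-dimensional directions of $\Lambda$, with prices in other directions pinned down by the agent's information. A secondary subtlety is consistency: replacing valuations by XOR bids may move the clearing price outside $\Lambda$. I would handle this by including a fallback package, such as $0$ or a robust bundle, and by noting that the bound above only needs to hold at the realized $\lambda^\ast$.
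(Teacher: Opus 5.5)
The paper gives no argument for this Fact beyond the pointer to \cite{hubner2026package}, so your proposal has to stand on its own. Its core inequality does. Under your assumptions (bundles bounded by $M$, maxima attained, the relevant price lying in $\Lambda$), every package the XOR bid demands at a price in $\Lambda$ is $2\delta M$-optimal for the true valuation.

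\textbf{The gap.} The weak step is the reduction in your first paragraph: that, because the valuation matters only through demand at the clearing price, it suffices to make $\tilde V_i$ uniformly close to $V_i$ on $\Lambda$. Closeness of profit functions tells you what each agent can earn at a given price. It does not tell you whether the finite package sets of all agents can be combined into a balanced allocation. Your welfare conclusion is conditional on a Walrasian equilibrium of the XOR market existing. But XOR bids make the market nonconvex, and such an equilibrium can fail to exist even with perfect price information.

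Here is an example. Take $S=N=T=1$, a seller with $\hat u_1(x)=10x$ on $[-5,0]$, and a buyer with $\hat u_2(x)=20x$ on $[0,3]$. The unique equilibrium is $\lambda^\ast=10$, $x_1^\ast=-3$, $x_2^\ast=3$, with welfare $30$. With $\Lambda=\{10\}$ and $K=1$, your construction may pick any $x^1\in\argmax_x \hat u_1(x)-10x=[-5,0]$. Suppose it picks the vertex $-5$ (what an LP solver returns), and the buyer picks $3$. Then $\tilde V_1$ and $\tilde V_2$ reproduce $V_1$ and $V_2$ exactly on all of $\mathbb{R}$. Yet the only balanced allocation built from the packages $\{0,-5\}$ and $\{0,3\}$ is $(0,0)$. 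The XOR market has no equilibrium and clears with welfare $0$.

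\textbf{What is missing.} Prices alone do not pin down the \emph{quantity} a marginal (price-setting) agent must be allocated. The packages must contain, approximately, the agent's equilibrium bundle, not an arbitrary selection from its demand set. Alternatively, you can keep your construction but bound the nonconvexity separately. Let $\tilde W$ be the true welfare of the welfare-maximizing allocation under the XOR bids. Weak duality gives $W^\ast-\tilde W\le 2I\delta M+\mathrm{gap}$, provided a dual minimizer of the XOR welfare problem lies in every agent's $\Lambda$. Here $\mathrm{gap}$ is the duality gap of the XOR problem, and it equals the full $30$ in the example above. Your bound on $V_i$ says nothing about this gap. It has to be controlled by a separate argument, for instance Shapley--Folkman-type bounds in large markets.

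Your ``fallback package'' remark addresses neither the existence problem nor the gap. Finally, as you note yourself, $\mathcal{N}(\Lambda,\delta)$ scales like $(\mathrm{diam}\,\Lambda/\delta)^{SNT}$. So ``limited'' is honest only when $\Lambda$ is essentially a single $\delta$-ball, in which case one package gives profit error at most $2M$ times the covering radius of $\Lambda$.
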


Note that the introduction of state-contingent contracts need not increase bidding complexity.  
Agents who do not wish to trade state-contingent quantities (i.e., who want to buy or sell the same amount in every state) can continue to bid as usual by indicating that their bid applies uniformly across all states.  
In practice, this could be implemented through a simple interface feature, such as a checkbox specifying that the same quantities apply in every state.

\subsection{Incompleteness}\label{subsec: incomplete day-ahead auction}

So far, we have not discussed \Cref{ass: complete market}, which requires that a contract exists for every possible state of the world. In practice, this condition can only be approximated, leading to an incomplete market and necessitating trading after $t=0$.  
Importantly, incompleteness arises not only with respect to states $s$, but also with respect to time $t$ and location $n$.

If the day-ahead auction at $t=0$ were complete -- i.e., if contracts were available for every relevant time period $t$, location $n$, and state $s$ -- then no additional trading between $t=0$ and $t=1$ would be required (see Chapter~19 in \cite{mas1995microeconomic}).  
In reality, however, day-ahead auctions are not, and likely will never be, complete in any of these dimensions.

First, time is inherently continuous, with every instant of the next day lying in the interval $t\in[0,24]\subset\mathbb{R}$. In practice, however, contracts are traded only for discrete intervals (e.g., 1-hour, 30-minute, or 15-minute periods), rather than for every point in $[0,24]$.  
This incompleteness in the time dimension necessitates balancing markets that operate shortly before $t=1$ to reconcile supply and demand within these intervals~\citep{Cramton2017}.

Second, each point in the network, down to individual sockets, could be considered a distinct location requiring its own contract. In practice, however, contracts are defined at a coarser spatial resolution, such as transmission nodes (nodal pricing, as used in the U.S.) or bidding zones (zonal pricing, as used in Europe).  
This spatial incompleteness gives rise to additional mechanisms, such as redispatch markets, that operate after $t=0$ to ensure network feasibility~\citep {aravena2021transmission}.

From a theoretical perspective, an incomplete day-ahead auction may lead to inefficiencies and can, in extreme cases, perform worse than having no day-ahead auction at all~\citep{hart1975optimality,radner1982equilibrium,mas1995microeconomic,magill2002theory}.  
In particular, incompleteness can create problematic incentives for market participants, as illustrated by inc-dec gaming behavior arising from incompleteness along the location dimension~\citep{lete2026power}.  
Similarly, it would be worthwhile to analyze distorted incentives arising from incompleteness along the time and state dimensions, but this lies beyond the scope of this paper.

In practice, achieving a complete market in any of these three dimensions is infeasible. Nonetheless, there is broad consensus that day-ahead markets should be made as complete as possible with respect to time and location~\citep{newbery2018market}. This article takes the position that this principle should be extended to the state-of-the-world dimension: any choice of $S>1$ yields a market that is strictly more complete than the current design with $S=1$.

\section{States of the World}\label{sec: state definition}

In \Cref{definition: state}, we defined a state $s$ as a subset $\Omega_s$ of the sample space $\Omega$ of a random variable $\xi$ realized at $t=t_1$. In principle, any element of the power set of $\Omega$ can serve as a state. If $\Omega$ is finite, this yields $2^{|\Omega|}$ possible states; if $\Omega$ is infinite, there are infinitely many possible states.

In practice, however, only a small finite number of states can be used to define contracts, likely of a similar order of magnitude as the current numbers of time periods ($T=96$) and locations ($N=61$) in European day-ahead auctions. This raises the question of how to select an appropriate set of states.

Ideally, an auctioneer would choose the $S$ states that maximize welfare in the auction among all possible state definitions. However, if the auctioneer had sufficient information to solve this problem directly, there would be little reason to run an auction in the first place. In practice, the auctioneer can only approximate this benchmark.
In the following, we propose an approach that requires only information about the probability distribution of the random variable $\xi$ and yields states that are easy to describe and compute.

We proceed as follows. In \Cref{subsec: criteria}, we introduce and motivate the criteria we use to select a set of states. In \Cref{subsec: voronoi}, we show that the sets satisfying these criteria correspond to centroidal Voronoi partitions, and in \Cref{subsec: location problem}, we explain how they can be computed by solving a location problem. Finally, in \Cref{subsec: discussion}, we discuss the choice of these criteria.

\subsection{Criteria}\label{subsec: criteria}

The first criterion that we would like a collection of states $\Omega_1, \ldots, \Omega_S$ to satisfy is that it forms a cover of the sample space.

\begin{criterion}[Cover]\label{property: cover}
The collection of states $\Omega_1, \ldots, \Omega_S$ is a cover if $\bigcup_{s\in\mathcal{S}} \Omega_s = \Omega$.
\end{criterion}

If the collection of states does not form a cover, agents may need to trade outside the electricity auction. For example, let $\xi$ denote wind speed (in m/s) at time $t=t_1$ at a given node, with $\Omega=[0,10]$. Consider two states defined by $\Omega_1=[0,1]$ and $\Omega_2=[2,10]$. Then, for realizations $\xi\in(1,2)$, no contract is available within the auction, requiring additional trade outside the auction.

The second criterion we would like a collection of states to satisfy is pairwise disjointness.

\begin{criterion}[Pairwise disjointness]\label{property: disjointness}
The collection of states $\Omega_1, \ldots, \Omega_S$ is pairwise disjoint if $\Omega_s \cap \Omega_j = \emptyset$ for all $s \neq j$.
\end{criterion}

If this condition is violated, multiple states may occur simultaneously, complicating trade. For example, if $\Omega_1=[0,5]$ and $\Omega_2=[2,10]$, then for $\xi\in[2,5]$ both states occur, and agents would need to fulfill both contracts.

Together with the non-emptiness requirement in \Cref{definition: state}, the cover and disjointness criteria characterize a partition of the sample space.

\begin{definition}[Partition]\label{prop: partition}
A collection of nonempty subsets $\Omega_1, \ldots, \Omega_S$ of $\Omega$ is called a partition of $\Omega$ if it satisfies \Cref{property: cover} and \Cref{property: disjointness}.
\end{definition}

With Criteria~\ref{property: cover} and~\ref{property: disjointness}, we have narrowed the set of admissible state collections $\Omega_1, \ldots, \Omega_S$ to partitions of $\Omega$. To choose among these partitions, we now introduce a size criterion.
Intuitively, we would like states to be as small as possible: a singleton state $\Omega_s=\{\xi'\}$ is ideal, whereas the entire sample space $\Omega_s=\Omega$ is the worst possible state.

Our notion of size should capture both volume (Lebesgue measure) and spatial dispersion (variance) of a state $\Omega_s$. For example, the sets
\[
\Omega_s=[0,0.1]\cup[1.0,1.1]\cup\cdots\cup[9.0,9.1]
\]
and $\Omega_s=[0,1]$
have the same volume, yet we would regard the former as larger, since it is dispersed across the entire sample space $\Omega=[0,10]$, whereas the latter is concentrated.

Because the states must cover $\Omega$ by \Cref{property: cover}, some sets will inevitably be large. We would therefore like to assign smaller states to regions with high probability mass and allow larger states in regions that are less likely to occur.

For example, consider wind speed $\Omega=[0,10]$ partitioned into $S=3$ states. If wind speeds lie in $[0,4]$ with probability $90\%$, we would prefer the partition
\[
\Omega_1=[0,2], \quad 
\Omega_2=(2,4], \quad 
\Omega_3=(4,10]
\]
to the equal-width partition
\[
\Omega_1=[0,3.33], \quad 
\Omega_2=(3.33,6.66], \quad 
\Omega_3=(6.66,10],
\]
because it provides finer resolution in the more likely region.

Designing states in this way requires the auctioneer to have some knowledge of the probability distribution of the underlying random variable. Formally, we assume:

\begin{assumption}[Auctioneer's Information]\label{assumption: forecast}
The auctioneer is endowed with a probability measure $\mathcal{P}$ with finite moments on the measurable space $(\Omega, \mathcal{B})$, where $\mathcal{B}$ denotes the Borel $\sigma$-algebra on $\Omega$.
\end{assumption}

Given this probability measure $\mathcal{P}$, we define the size of a state $\Omega_s$ as the variance-weighted probability mass
\begin{equation}\label{eq: size metric}
    \text{Size}(\Omega_s) := \int_{\xi\in\Omega_s} \normtwo{\xi-\omega_s}^2 \,\diff \mathcal{P},
\end{equation}
where
\begin{equation*}
    \omega_s \ := \ \frac{1}{\mathcal{P}(\Omega_s)} \int_{\xi\in\Omega_s} \xi \,\diff \mathcal{P}
\end{equation*}
denotes the barycentre of $\Omega_s$ (also called the centroid or center of mass).

The metric~\eqref{eq: size metric} captures volume, likelihood, and dispersion simultaneously:
\begin{itemize}
    \item \emph{Volume}: integration over $\Omega_s$ reflects its geometric size; in particular, integrating the constant function $1$ under a uniform measure recovers its volume.
    \item \emph{Likelihood}: integration with respect to $\mathcal{P}$ weights each state by its probability mass.
    \item \emph{Dispersion}: weighting by squared distance to the barycentre $\omega_s$ penalizes spatial spread within the state.
\end{itemize}

To extend this notion of size from a single state $\Omega_s$ to a collection $\Omega_1,\ldots,\Omega_S$, we aggregate sizes across states. We adopt the $\ell_1$-norm, so that the size of the collection is the sum of the individual sizes:
\begin{equation*}
    \text{Size}(\Omega_1,\ldots,\Omega_S) := \sum_{s\in\mathcal{S}} \text{Size}(\Omega_s).
\end{equation*}
Minimizing this aggregate size leads to the following criterion:

\begin{criterion}[Minimal Size]\label{property: minimal}
The collection of states $\Omega_1,\ldots,\Omega_S$ has minimal size if it solves
\begin{subequations}\label{eq: optimization problem}
\begin{align}
    \min_{\Omega_1,\ldots,\Omega_S} 
    & \quad \sum_{s\in\mathcal{S}} \int_{\xi\in\Omega_s} \normtwo{\xi-\omega_s}^2 \,\diff \mathcal{P} \label{eq: optimization problem objective function} \\
    \text{s.t.} 
    & \quad \Omega_1,\ldots,\Omega_S \;\, \text{partition} \;\, \Omega, \label{eq: optimization problem constraint properties}\\
    & \quad \omega_s = \frac{1}{\mathcal{P}(\Omega_s)} \int_{\xi\in\Omega_s} \xi \,\diff \mathcal{P} \quad \forall s\in\mathcal{S} . \label{eq: optimization problem constraint barycentre}
\end{align}
\end{subequations}
\end{criterion}

Constraint~\eqref{eq: optimization problem constraint properties} ensures that the states form a partition and thus satisfy \Cref{property: cover} and \Cref{property: disjointness}. Constraint~\eqref{eq: optimization problem constraint barycentre} defines the barycentre used in the objective~\eqref{eq: optimization problem objective function}, which corresponds to the total size of the partition.

In the next subsection, we characterize the collections of states $\Omega_1,\ldots,\Omega_S$ that solve problem~\eqref{eq: optimization problem} and thus satisfy Criteria~\ref{property: cover}–\ref{property: minimal}.

\subsection{Voronoi Partition}\label{subsec: voronoi}

The partition problem~\eqref{eq: optimization problem} has been extensively studied in the literature on optimal quantization of probability distributions~\citep{Graf2000,Pflug2014} and location analysis~\citep{Eiselt2011location}.  
A central result in this literature is that its solutions are given by \emph{centroidal Voronoi partitions}.  
We briefly review the relevant concepts and refer to the above references for a detailed treatment.

We begin by defining a Voronoi tessellation.

\begin{definition}[Voronoi Tessellation]
Let \(\{p_1,\dots,p_n\}\subset \mathbb{R}^d\) be a finite set of points.  
The Voronoi cell associated with \(p_i\) is
\begin{equation*}\label{eq: voronoi cell}
    V_i=\{x\in \mathbb{R}^d \mid \|x-p_i\|_2 \le \|x-p_j\|_2 \text{ for all } j\neq i\}.
\end{equation*}
The collection \(\{V_1,\dots,V_n\}\) is called the Voronoi tessellation (or Voronoi diagram) generated by \(\{p_1,\dots,p_n\}\).
\end{definition}

The defining property of a Voronoi tessellation is that each point $x$ is assigned to the cell corresponding to its closest point $p_i$.  
\Cref{fig: voronoi paritions illustration} illustrates such tessellations in two dimensions.

\begin{figure}[tb] 
    \centering 
\begin{subfigure}{0.45\textwidth} 
    \centering 
    \includegraphics[width=\linewidth]{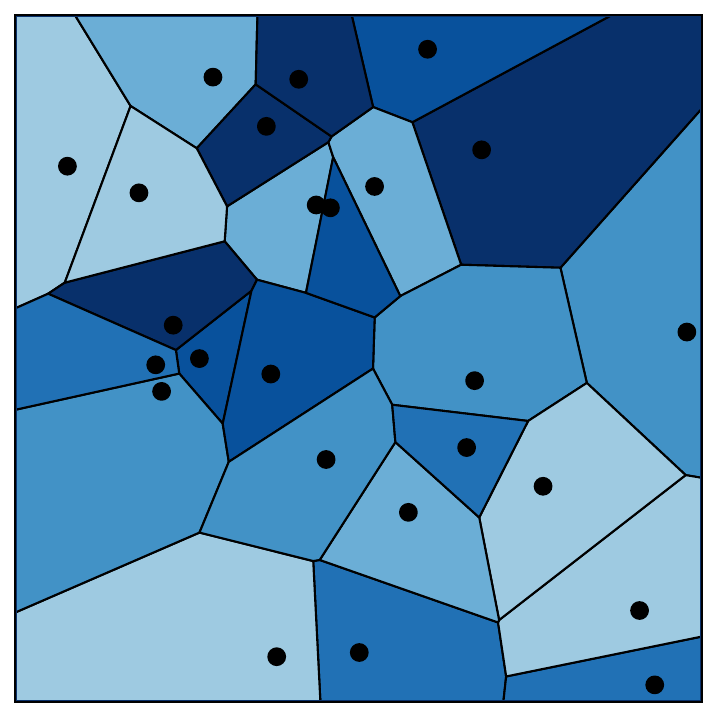} 
    \caption{Non-centroidal Voronoi Tessellation} 
\end{subfigure} 
    \hfill 
\begin{subfigure}{0.45\textwidth} 
    \centering 
    \includegraphics[width=\linewidth]{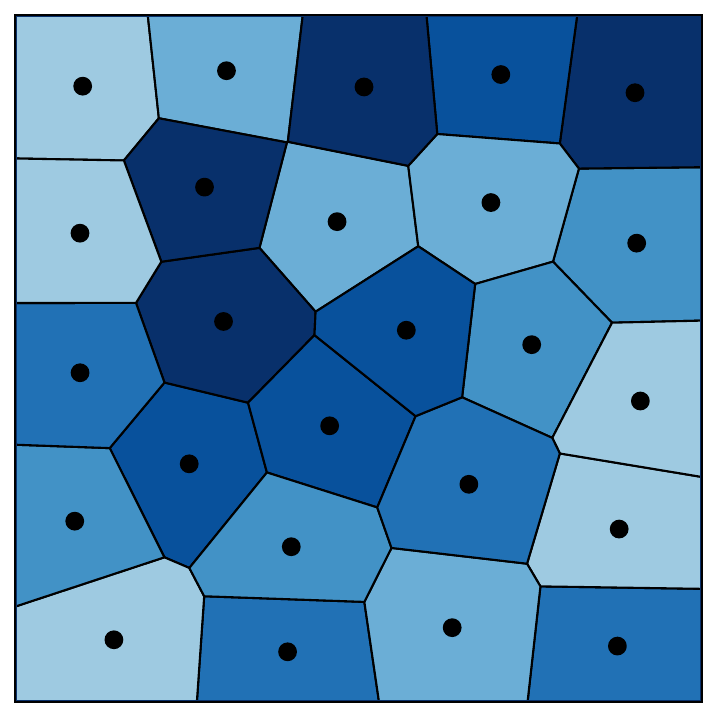} 
    \caption{Centroidal Voronoi Tessellation} 
\end{subfigure} 
    \caption{Non-centroidal and centroidal Voronoi tessellations of a two-dimensional square.} 
    \label{fig: voronoi paritions illustration} 
\end{figure}

For the cells to form a partition, boundary points must be assigned to exactly one cell. Assigning them to multiple cells would violate disjointness, while excluding them would violate the covering property.

A Voronoi tessellation is called \emph{centroidal} if each generating point $p_i$ coincides with the barycentre of its corresponding cell $V_i$ with respect to a probability measure $\mathcal{Q}$:
\begin{equation*}\label{eq: barycentre}
    \omega_i \ := \ \frac{1}{\mathcal{Q}(V_i)} \int_{x\in V_i} x \,\diff \mathcal{Q}.
\end{equation*}
In \Cref{fig: voronoi paritions illustration}, this is illustrated for the case where $\mathcal{Q}$ is the uniform distribution.

While any set of distinct points $\{p_1,\dots,p_n\}$ induces a Voronoi tessellation, it will in general not be centroidal.  
The sets of points that induce centroidal Voronoi tessellations correspond precisely to the local minimizers of the \emph{location problem}
\begin{equation*}\label{eq: location problem}
\min_{p_1,\ldots,p_n\in\mathbb{R}^d} \;\; \int_{x\in\mathbb{R}^d} \; \min_{p\in\{p_1,\dots,p_n\}} \; \normtwo{x-p}^2 \;\; \diff\mathcal{Q}.
\end{equation*}
This problem seeks to place $n$ points in $\mathbb{R}^d$ so as to minimize the average squared distance to the nearest point. It is also known as the $n$-Weber, $n$-median or continuous facility location problem~\citep{Eiselt2011location}.

We can now characterize the solutions to our partitioning problem~\eqref{eq: optimization problem}.  
Optimal partitions correspond to centroidal Voronoi partitions induced by points that are \textit{global} solutions to the location problem under the measure $\mathcal{P}$.

\begin{theorem}\label{theorem: optimal partition}
A collection of sets $\Omega_1^\ast,\ldots,\Omega_S^\ast$ solves~\eqref{eq: optimization problem} if and only if:
\begin{enumerate}[label=(\roman*)]
    \item The barycentres $\omega_1^\ast,\ldots,\omega_S^\ast$ solve
    \begin{equation}\label{eq: optimal quantization}
    \min_{\omega_1,\ldots,\omega_S\in\Omega} \;\; \int_{\xi\in\Omega} \; \min_{\omega\in\{\omega_1,\dots,\omega_S\}} \; \normtwo{\xi-\omega}^2 \;\; \diff\mathcal{P}.
    \end{equation}
    \item Each set $\Omega_s^\ast$ contains the open Voronoi cell associated with $\omega_s^\ast$:
    \begin{equation*}\label{eq: open voronoi cell}
    \Omega_s^\ast \ \supseteq \ \big\{\xi\in\Omega \mid \|\xi-\omega_s^\ast\|^2 < \min_{j\neq s} \|\xi-\omega_j^\ast\|^2 \big\}.
    \end{equation*}
    \item The sets $\Omega_1^\ast,\ldots,\Omega_S^\ast$ form a partition of $\Omega$.
\end{enumerate}
\end{theorem}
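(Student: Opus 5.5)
The plan is to connect the partition problem~\eqref{eq: optimization problem} and the location problem~\eqref{eq: optimal quantization} through the auxiliary functional
\[
G(\Omega_1,\ldots,\Omega_S;\,p_1,\ldots,p_S) := \sum_{s\in\mathcal{S}} \int_{\xi\in\Omega_s} \normtwo{\xi-p_s}^2 \,\diff\mathcal{P},
\]
defined on pairs of a partition and arbitrary points $p_s\in\Omega$. Two elementary facts drive the proof. First, for a fixed set $\Omega_s$ with $\mathcal{P}(\Omega_s)>0$, the bias--variance identity
\[
\int_{\Omega_s} \normtwo{\xi-p}^2\diff\mathcal{P} = \int_{\Omega_s}\normtwo{\xi-\omega_s}^2\diff\mathcal{P} + \mathcal{P}(\Omega_s)\normtwo{\omega_s-p}^2
\]
holds. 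It shows that the barycentre is the unique minimizer over $p$ when $\mathcal{P}(\Omega_s)>0$; sets of measure zero contribute nothing. Second, for fixed points, the integrand satisfies $\normtwo{\xi-p_s}^2 \ge \min_j \normtwo{\xi-p_j}^2$ pointwise on $\Omega_s$. Hence any partition is beaten, weakly, by a Voronoi partition with ties broken arbitrarily.

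Combining the two facts gives the value identity
\[
\min\eqref{eq: optimization problem} = \min_{\Omega,p} G = \min\eqref{eq: optimal quantization}.
\]
This requires checking that barycentres of cells lie in $\Omega$, or else working with the closed convex hull. I would note that restricting $\omega_s\in\Omega$ in~\eqref{eq: optimal quantization} is harmless whenever $\Omega$ is convex. If $\Omega$ is not convex, I would interpret the result with minimization over the convex hull. Finite second moments from \Cref{assumption: forecast} ensure that all integrals are finite. I would also cite the standard existence of minimizers from \cite{Graf2000}.

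($\Rightarrow$) Let $\Omega^\ast$ be optimal with barycentres $\omega^\ast$. Then
\[
V^\ast = G(\Omega^\ast;\omega^\ast) \ge \int \min_j\normtwo{\xi-\omega_j^\ast}^2\diff\mathcal{P} \ge \min\eqref{eq: optimal quantization} = V^\ast,
\]
so both inequalities are equalities, which gives (i). Equality in the first inequality forces $\normtwo{\xi-\omega_s^\ast}^2=\min_j\normtwo{\xi-\omega_j^\ast}^2$ for $\mathcal{P}$-a.e.\ $\xi\in\Omega_s^\ast$. Thus points of the open Voronoi cell of $\omega_s$ can lie in another $\Omega_j^\ast$ only on a $\mathcal{P}$-null set. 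Property (iii) holds by the constraint.

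\textbf{The main obstacle} is that (ii) demands exact containment, whereas the argument yields it only up to null sets. One cannot simply move a $\mathcal{P}$-null set of points between cells: the objective is unchanged, so such modified partitions are also optimal, yet they violate (ii). I would therefore treat (ii) as holding $\mathcal{P}$-almost surely, or assume that $\mathcal{P}$ has full support so that containment can be upgraded using openness of the cells. The author presumably intends this a.e.\ reading. I would state this convention explicitly, and also handle cells of zero mass, where the barycentre is undefined, by assuming optimal points are distinct with positive-mass cells.

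($\Leftarrow$) Given (i)--(iii), the partition assigns each point of an open cell to its nearest centre. Boundary points are equidistant to several centres, so any assignment attains
\[
G(\Omega^\ast;\omega^\ast)=\int\min_j\normtwo{\xi-\omega_j^\ast}^2\diff\mathcal{P}=V^\ast.
\]
It remains to show that $\omega^\ast$ are actually the barycentres of the $\Omega_s^\ast$; this is where global optimality in (i) is used. If some barycentre differed from $\omega_s^\ast$ on a cell of positive mass, the bias--variance identity would give $\mathrm{Size}(\Omega^\ast)<V^\ast$, contradicting the value identity. This yields both centroidality and optimality of $\Omega^\ast$ in~\eqref{eq: optimization problem}.
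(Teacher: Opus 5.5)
Your route (bias--variance identity plus nearest-centre dominance) is the standard argument of Section~4.1 of \cite{Graf2000}. The paper gives no proof of its own and simply cites that section. Your diagnosis that the literal statement needs convex $\Omega$, positive-mass cells, and an a.e.\ reading of (ii) is correct.

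There is, however, a genuine gap in ($\Leftarrow$). Conditions (ii)--(iii) only force points of the open cells into the right sets. A point lying in no open cell has at least two nearest centres, but nothing prevents it from belonging to some $\Omega_s^\ast$ with $\omega_s^\ast$ \emph{not} among them. For example, a point equidistant from $\omega_1^\ast,\omega_2^\ast$ and farther from $\omega_3^\ast$ may lie in $\Omega_3^\ast$. So ``any assignment attains $\int\min_j\|\xi-\omega_j^\ast\|^2\diff\mathcal{P}$'' does not follow unless the tie set $T$ (points where the minimum is attained at least twice) is $\mathcal{P}$-null.

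This is true, but it needs proof, and it is an essential use of global optimality in (i) (cf.\ Theorem~4.2 of \cite{Graf2000}). For $a\neq b$, let $T_{ab}\subseteq T$ be the points having both $\omega_a^\ast$ and $\omega_b^\ast$ among their nearest centres, and suppose $\mathcal{P}(T_{ab})>0$. Take two nearest-centre partitions that agree outside $T_{ab}$, one assigning $T_{ab}$ to $a$ and the other to $b$.
\begin{itemize}
\item Both satisfy $G(\cdot\,;\omega^\ast)=V^\ast$, so by your value identity both are optimal.
\item By the bias--variance identity, each of their positive-mass cells then has barycentre equal to its centre.
\item Comparing the two $a$-cells $A\cup T_{ab}$ and $A$ shows that the barycentre of $\mathcal{P}$ restricted to $T_{ab}$ equals $\omega_a^\ast$. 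If $\mathcal{P}(A)=0$, this follows directly from the first partition.
\item But $T_{ab}$ lies in the bisecting hyperplane $\{\xi:\|\xi-\omega_a^\ast\|=\|\xi-\omega_b^\ast\|\}$. This set is affine and therefore contains that barycentre, whereas it does not contain $\omega_a^\ast$ since $\omega_a^\ast\neq\omega_b^\ast$.
\end{itemize}
Hence $\mathcal{P}(T)=0$. Only then does $G(\Omega^\ast;\omega^\ast)=V^\ast$ hold, and the rest of your ($\Leftarrow$) goes through.

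Separately, your alternative repair of (ii) does not work. You propose assuming full support and using openness of the cells, but full support does not exclude nonempty null sets. Take $\mathcal{P}$ uniform on $\Omega=[0,1]$ and $S=2$. Moving the single point $1/4$ from $[0,\tfrac12)$ into the other cell changes neither barycentres nor objective. The resulting partition is therefore optimal, yet $\Omega_1^\ast$ no longer contains the open cell $[0,\tfrac12)$, so (ii) fails. Only the a.e.\ reading of (ii) is correct in general. Exact containment holds only when $\mathcal{P}$ is purely atomic and its atoms exhaust $\Omega$.
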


A proof can be found, for instance, in Section~4.1 of \cite{Graf2000} or Section~4.1.3 of \cite{Pflug2014}.

\Cref{theorem: optimal partition} characterizes all optimal partitions.  
To construct one, it suffices to solve~\eqref{eq: optimal quantization} and specify a rule for assigning boundary points.
A simple tie-breaking rule assigns each boundary point to the cell with the smallest index. This yields the following constructive version of \Cref{theorem: optimal partition}.

\begin{corollary}\label{corollary: constructive centroidal voronoi}
Let $\omega_1^\ast,\ldots,\omega_S^\ast$ solve~\eqref{eq: optimal quantization}. Then a solution to~\eqref{eq: optimization problem} is given by $\Omega_1^\ast, \ldots, \Omega_S^\ast$ with
\begin{equation*}
\Omega_s^\ast = \Omega_s \setminus \bigcup_{l<s} \Omega_l 
\end{equation*}
where
\begin{equation*}
\Omega_s = \big\{\xi\in\Omega \mid \|\xi-\omega_s^\ast\|^2 \le \min_{j\neq s} \|\xi-\omega_j^\ast\|^2 \big\}.
\end{equation*}
\end{corollary}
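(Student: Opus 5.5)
The plan is to check directly that the sets $\Omega_1^\ast,\ldots,\Omega_S^\ast$ produced by the tie-breaking construction satisfy the three conditions (i)--(iii) of \Cref{theorem: optimal partition}. The sufficiency direction of that theorem then gives that they solve~\eqref{eq: optimization problem}. Condition (i) holds by hypothesis, so the work is in (ii), (iii), and a small technical point about the barycentres.

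For (iii), I first show that the closed cells $\Omega_1,\ldots,\Omega_S$ cover $\Omega$. Any $\xi\in\Omega$ has a nearest point among the finitely many $\omega_j^\ast$; if it is $\omega_s^\ast$, then $\xi\in\Omega_s$. Write $\Omega_s^\ast=\Omega_s\setminus\bigcup_{l<s}\Omega_l$. Then $\Omega_s^\ast$ consists of the points whose smallest nearest index is $s$. Each $\xi$ has exactly one smallest nearest index, so the $\Omega_s^\ast$ are pairwise disjoint and their union is $\Omega$.

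For (ii), take $\xi$ in the open cell of $\omega_s^\ast$, so that $\|\xi-\omega_s^\ast\|<\|\xi-\omega_j^\ast\|$ for all $j\neq s$. Then $\xi\in\Omega_s$ and $\xi\notin\Omega_l$ for every $l\neq s$, and in particular for every $l<s$. Hence $\xi\in\Omega_s^\ast$.

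The main obstacle is nonemptiness of each $\Omega_s^\ast$, which \Cref{prop: partition} requires, together with the consistency condition that $\omega_s^\ast$ really is the barycentre of $\Omega_s^\ast$ in the sense of~\eqref{eq: optimization problem constraint barycentre}. Two things could go wrong: a minimiser of~\eqref{eq: optimal quantization} might contain coincident points, or a cell might have zero $\mathcal{P}$-mass. To handle this I would first show that every optimal quantizer puts positive mass in each open cell. If some $\omega_s^\ast$ duplicated another point, or had an open cell of mass zero, then moving it to a point of the support of $\mathcal{P}$ where the distortion is positive would strictly decrease~\eqref{eq: optimal quantization}. This argument assumes the support of $\mathcal{P}$ has at least $S$ points, which is implicit whenever~\eqref{eq: optimization problem} is feasible.

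Next I would use the standard first-order (centroid) condition from optimal quantization, \citep[Section 4.1]{Graf2000}. At a minimiser, the boundaries between cells carry zero $\mathcal{P}$-mass, and $\omega_s^\ast$ equals the conditional mean of $\mathcal{P}$ on its cell; if $\omega_s^\ast$ were not that mean, replacing it by the conditional mean would strictly lower the objective. Because the boundaries are null, the conditional mean is the same for every tie-breaking assignment, so it equals the barycentre of $\Omega_s^\ast$. Nonemptiness of $\Omega_s^\ast$ follows because it contains the open cell of $\omega_s^\ast$, which has positive mass. With (i)--(iii) established, \Cref{theorem: optimal partition} yields the corollary.
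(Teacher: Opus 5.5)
Your proposal is correct and follows the paper's route: the paper derives the corollary directly from \Cref{theorem: optimal partition}, noting that smallest-index tie-breaking yields a partition containing each open Voronoi cell. Your extra checks (distinct centres with positive-mass open cells, null boundaries, and the centroid condition, so that the given $\omega_s^\ast$ really are the barycentres of the constructed $\Omega_s^\ast$ as \Cref{theorem: optimal partition} requires) make explicit what the paper leaves implicit. The one tacit assumption, which the paper shares, is that the conditional means used in the centroid step are admissible in~\eqref{eq: optimal quantization}, i.e.\ lie in $\Omega$; this holds for convex $\Omega$, or when the $\omega_s$ range over $\mathbb{R}^k$ as in the MIQP. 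Without it the corollary can fail. For example, take $\Omega=\{0,1,1+\sqrt{3.5}\}$ with masses $0.45,0.45,0.1$ and $S=2$: the restricted quantizer is $(0,1)$, whose cells have total size $\approx 0.286$, whereas the partition $\{0,1\},\{1+\sqrt{3.5}\}$ has size $0.225$.
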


In the next subsection, we discuss how to solve problem~\eqref{eq: optimal quantization}. By \Cref{corollary: constructive centroidal voronoi}, this suffices to construct a partition that solves~\eqref{eq: optimization problem}.

\subsection{Location Problem}\label{subsec: location problem}

The optimization problem~\eqref{eq: optimal quantization} has been extensively studied in the literature on facility location~\citep{Eiselt2011location}.  
In particular, when the probability measure $\mathcal{P}$ is discrete, it can be formulated as a mixed-integer quadratic program (MIQP).

Suppose that $\mathcal{P}$ is supported on finitely many points $\{\xi_l\}_{l\in\mathcal{L}}$, where $\mathcal{L}=\{1,\ldots,L\}$, with associated probabilities $\pi_l$. Then problem~\eqref{eq: optimal quantization} can be written as
\begin{subequations}\label{eq: miqp}
\begin{align}
\min_{\omega,d,z} & \quad \sum_{l\in\mathcal{L}} \pi_l \, d_l \\
\text{s.t.} 
& \quad d_l \ge \normtwo{\xi_l - \omega_s}^2 - M(1 - z_{ls}) 
    \quad \forall l\in\mathcal{L},\; s\in\mathcal{S}, \\
& \quad \sum_{s\in\mathcal{S}} z_{ls} = 1 
    \quad \forall l\in\mathcal{L}, \\
& \quad z_{ls} \in \{0,1\} 
    \quad \forall l\in\mathcal{L},\; s\in\mathcal{S}, \\
& \quad \omega_s \in \mathbb{R}^k
    \quad \forall s\in\mathcal{S} .
\end{align}
\end{subequations}

Here, $\omega_s$ denotes the location of the $s$-th centre. The binary variable $z_{ls}$ indicates whether point $\xi_l$ is assigned to centre $\omega_s$, and $d_l$ represents the squared Euclidean distance from $\xi_l$ to its assigned centre. The parameter $M$ is a sufficiently large constant.

The constraint
\[
d_l \;\ge\; \|\xi_l - \omega_s\|_2^2 - M(1 - z_{ls})
\]
ensures that if $z_{ls}=1$, then
\[
d_l \;\ge\; \|\xi_l - \omega_s\|_2^2.
\]
Since the objective minimizes $d_l$, it follows that $d_l = \|\xi_l - \omega_s\|_2^2$ whenever point $\xi_l$ is assigned to centre $s$. Because the assignment constraints enforce that exactly one $z_{ls}$ equals one for each $l$, the variable $d_l$ represents the squared distance from $\xi_l$ to its closest centre among $\{\omega_1,\dots,\omega_S\}$.

\subsection{Discussion}\label{subsec: discussion}

While the covering and disjointness criteria (\Cref{property: cover} and \ref{property: disjointness}) are arguably uncontroversial, the minimal-size criterion (\Cref{property: minimal}) is more open to discussion.  
However, beyond ensuring that states have a small probability-weighted size, it offers two important advantages: states become easy to describe and computationally tractable.

First, the minimal-size criterion leads to a Voronoi partition. As a result, each state $\Omega_s$ can be described by a single representative point $\omega_s$ together with a notion of distance. This yields a simple and intuitive description:
\begin{quote}
    ``State $s$ occurs when $\xi$ is closer to $\omega_s$ than to any other $\omega_j$.'' 
\end{quote}
While in one dimension ($\xi\in\mathbb{R}$) this offers little advantage, already in two dimensions ($\xi\in\mathbb{R}^2$) it is significantly simpler than describing states through systems of inequalities.  
Having states that are easy to describe and understand is practically important to keep agents’ trading costs (``transaction costs'') low. Otherwise, the economic benefits of introducing additional states may be offset by the increased complexity of trading them.

Second, the minimal-size criterion yields states that can be computed by solving the MIQP~\eqref{eq: miqp}. Although this problem is NP-hard and becomes computationally challenging in high dimensions, it may remain tractable in practice, as the need for interpretability discussed above requires the dimension $k$ of the random variable $\xi\in\mathbb{R}^k$ to be small.  
For instance, in \Cref{subsec: criteria} we described states as a partition of the real line, which is easy to understand. A partition of a two-dimensional space, as illustrated in \Cref{fig: voronoi paritions illustration}, is already less transparent, and a partition of a high-dimensional space would be difficult to communicate. In practice, the number of relevant random variables is therefore likely to be small, perhaps between one and five, which helps keep the MIQP tractable.

If the problem size nevertheless becomes too large, a wide range of approximation algorithms and heuristics from the facility location literature can be applied~\citep{Eiselt2011location,Pflug2014}. While such methods may not yield exact centroidal Voronoi partitions, the resulting approximations are likely sufficient for practical purposes.

Finally, our approach assumes that the auctioneer has access to a probability measure $\mathcal{P}$ of $\xi$. In practice, much richer information is available, including detailed data on load, generation, and network conditions. This suggests that alternative criteria for defining states could be considered. However, any such approach should preserve the key advantages of the minimal-size criterion~--~namely, that states remain describable and computationally tractable.

\section{Case Study}\label{sec: case study}

In the European power system, a large share of uncertainty arises from offshore wind generation in the North Sea~\citep{schillings2012}.  
A natural way to capture this uncertainty in the day-ahead auction is to define states based on expected wind conditions in the North Sea on the following day.

In the following, we illustrate how such states can be constructed. First, in \Cref{subsec: choosing a random variable}, we discuss how to choose a random variable $\xi$ whose sample space $\Omega$ serves as the basis for defining states, and how to obtain a probability measure $\mathcal{P}$. Second, in \Cref{subsec: computing states}, we describe how to compute and interpret the resulting states.

In our case study, we focus on the computation and interpretation of states. We do not quantify the welfare gains from moving from a deterministic to a stochastic market design. This has already been studied extensively in the literature cited in the introduction, which consistently finds substantial welfare gains from such a transition.

\subsection{Random Variable}\label{subsec: choosing a random variable}

The first step is to choose a time at which wind speed is measured. As discussed in \Cref{subsec: model} and \Cref{definition: state}, the random variable must be realized before $t=1$, which corresponds to midnight in the European day-ahead auction. A natural choice is therefore the wind speed at 11 pm.

Next, the auctioneer must select the locations at which wind speed is measured. This choice should reflect both the geographical distribution of wind farms and the correlation structure of wind speeds across locations, which depends on climatic and meteorological conditions.  
A possible choice of measurement locations in the North Sea is:
\begin{itemize}
    \item Location 1: $\ang{52} \text{N}, \ang{2} \text{E}$ (off the southern English and French coast),
    \item Location 2: $\ang{54} \text{N}, \ang{7} \text{E}$ (off the Dutch, German, and Danish coast).
\end{itemize}
\Cref{fig: north sea map} shows these two measurement points. The random variable $\xi$ is then defined as
\begin{equation*}
    \xi =
    \begin{pmatrix}
        \text{Wind speed at 11 pm at Location 1} \\
        \text{Wind speed at 11 pm at Location 2}
    \end{pmatrix}.
\end{equation*}

\begin{figure}[tb]
    \centering
    \includegraphics[width=0.5\linewidth]{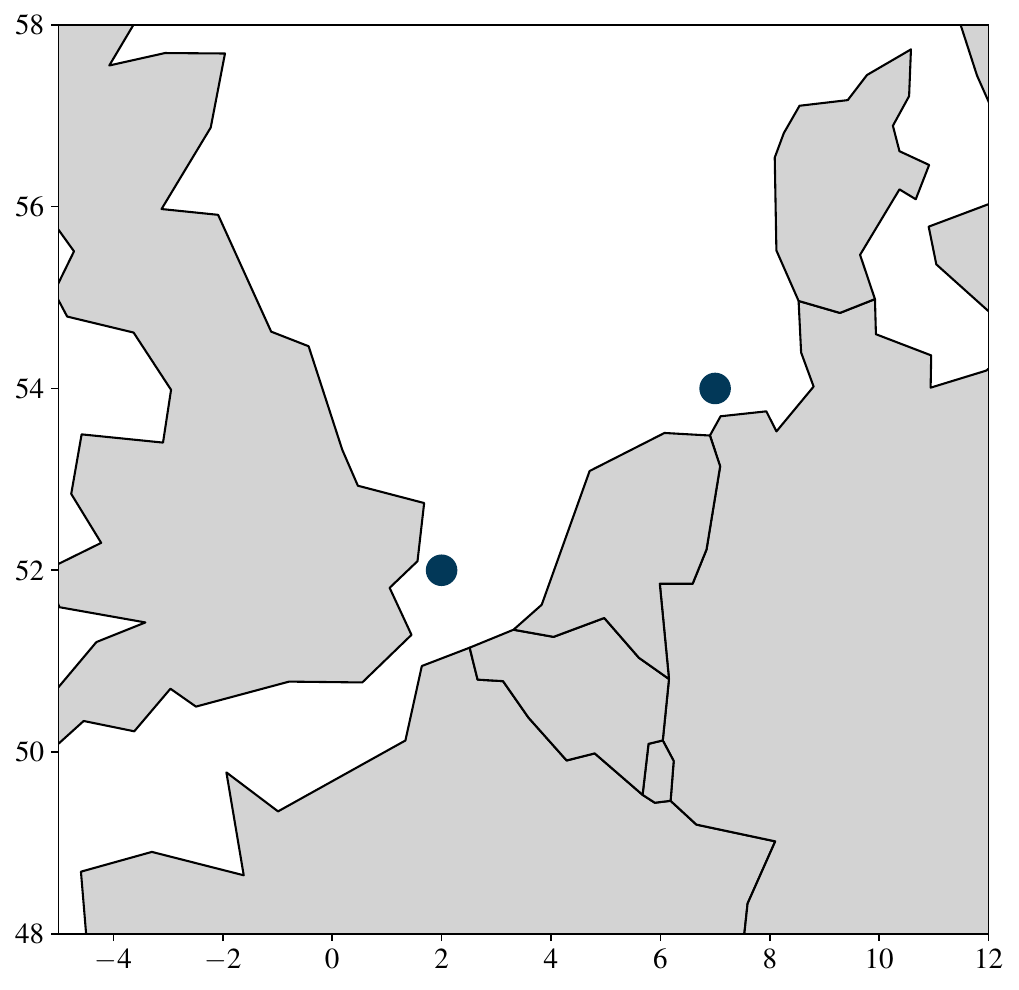}
    \caption{Locations at which wind speed is measured in the European North Sea.}
    \label{fig: north sea map}
\end{figure}

While incorporating more locations increases the completeness of contracts, it also makes the resulting states harder to interpret (cf. \Cref{subsec: discussion}). We therefore restrict attention to two locations to allow for simple graphical illustration.

\begin{remark}[Choosing a Random Variable]
The choice of the random variable is a delicate task. As illustrated in \Cref{fig: north sea map}, wind farms located close to the selected measurement points benefit from states defined by local wind conditions, whereas wind farms farther away must base their trading decisions on wind speeds measured at more distant locations.
Ideally, one would identify one or several locations whose wind speeds provide equally good predictions for all relevant wind farm sites. Conducting such an analysis for offshore wind farms in the European North Sea would be highly valuable, but lies beyond the scope of this paper.
\end{remark}

\begin{figure}[tb]
    \centering
    \includegraphics[width=0.4\linewidth]{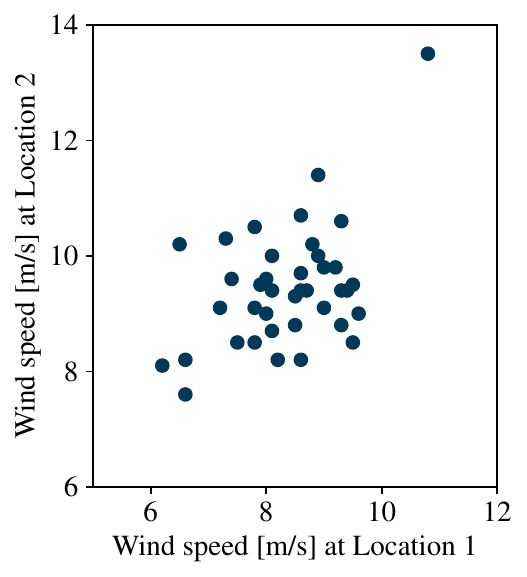}
    \caption{Distribution of wind speeds at Locations 1 and 2 at 11 pm on February 18, 2026.}
    \label{fig:forecast}
\end{figure}

Given $\xi$, the auctioneer\endnote{In the European context, the auctioneer is best represented by the committee of nominated electricity market operators~\citep{nemo-committee}, which is mandated by the European Union to operate the day-ahead and intraday markets~\citep{acer_market_rules}.} must specify how contracts are differentiated across states before the day-ahead auction begins at 11 am. Ideally, this information is announced sufficiently early to allow market participants to prepare their bids. One possible timeline is to announce the state definition at 9 am, based on forecasts generated between 8 am and 9 am.

The auctioneer can obtain a probability distribution $\mathcal{P}$ for $\xi$ from meteorological services. For illustration, we use the platform~\cite{openmeteo_ensemble_api} to access forecasts from the German National Meteorological Service (DWD) and construct a probabilistic forecast for February 18, 2026.  
The resulting distribution consists of 39 equally weighted scenarios and is shown in \Cref{fig:forecast}.

\subsection{States of Wind Speed}\label{subsec: computing states}

We now construct $S$ states $\Omega_1,\ldots,\Omega_S$ by solving the optimal partitioning problem~\eqref{eq: optimization problem}.  
In this setting, the sample space of $\xi$ is a bounded subset of $\mathbb{R}^2_{\ge 0}$, and the probability measure $\mathcal{P}$ assigns mass to 39 points.

We solve the MIQP~\eqref{eq: miqp} using Gurobi 12 for $S=2,3,$ and $4$. In all cases, the solver reaches the imposed time limit of 120 seconds, and we therefore report the best solutions found. Note that the resulting partitions are still Voronoi partitions, but not necessarily centroidal, as optimality is not guaranteed (cf. \Cref{subsec: discussion}).

\Cref{tab:branch_and_cut_summary} reports objective values and lower bounds.  
\Cref{tab: state_definitions} presents the resulting states and their interpretation:
\begin{quote}
    ``State $s$ occurs when wind speeds at Locations 1 and 2 are closer to $\omega_s$ at 11 pm than to any other $\omega_j$.'' 
\end{quote}
The corresponding partitions are illustrated in \Cref{fig: partition_wind_speed}.

\begin{table}[tb]
\centering
\small
\begin{tabular}{c c c}
\hline
$S$ & Objective value & Best lower bound \\
\hline
2 & 1.25 & 1.09 \\
3 &  0.84 & 0.3 \\
4 &  0.63 & 0.03 \\
\hline
\end{tabular}
\caption{Objective value and best lower bound found by Gurobi 12 for the MIQP~\eqref{eq: miqp} within a time limit of 120 seconds.}
\label{tab:branch_and_cut_summary}
\end{table}

\begin{table}[tb]
\centering
\small
\begin{tabular}{ccc}
\toprule
State & Interpretation & Defining point $(\omega_s)$ \\
\midrule
1 & Generally Low wind & $(7.6,8.9)$ \\
2 & Generally High wind & $(9.0,9.9)$ \\
\midrule
1 & Generally Low wind & $(7.5,9.0)$ \\
2 & Generally High wind & $(9.8,12.4)$ \\
3 & High wind at Location 1 & $(8.9,9.6)$ \\
\midrule
1 & Generally Low wind & $(6.5,8.0)$ \\
2 & Generally High wind & $(9.8,12.4)$ \\
3 & Low wind at Location 1 & $(7.7,9.3)$ \\
4 & High wind at Location 1 & $(9.0,9.5)$ \\
\bottomrule
\end{tabular}
\caption{Description of states for $S=2,3,4$. }
\label{tab: state_definitions}
\end{table}

\begin{figure}[tb]
    \centering
    \begin{subfigure}{0.32\textwidth}
        \centering
        \includegraphics[width=\linewidth]{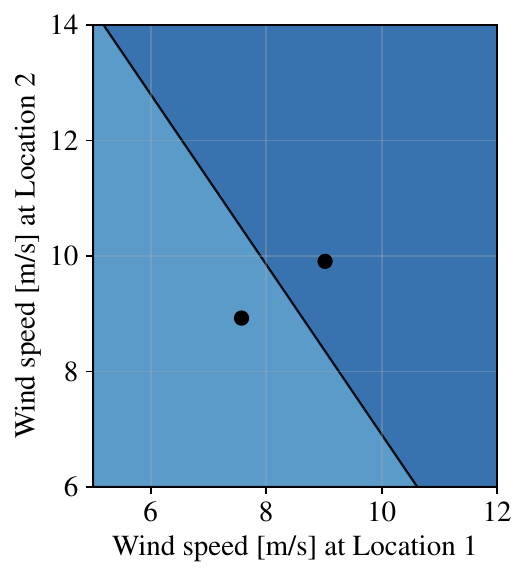}
        \caption{Two states $S=2$}
    \end{subfigure}
    \hfill
    \begin{subfigure}{0.32\textwidth}
        \centering
        \includegraphics[width=\linewidth]{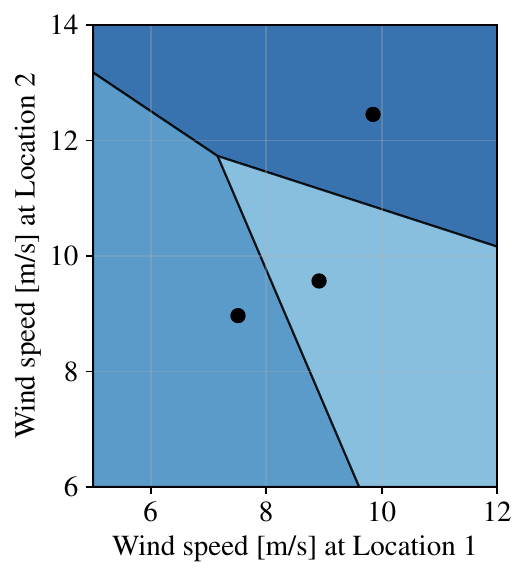}
        \caption{Three states $S=3$}
    \end{subfigure}
    \hfill
    \begin{subfigure}{0.32\textwidth}
        \centering
        \includegraphics[width=\linewidth]{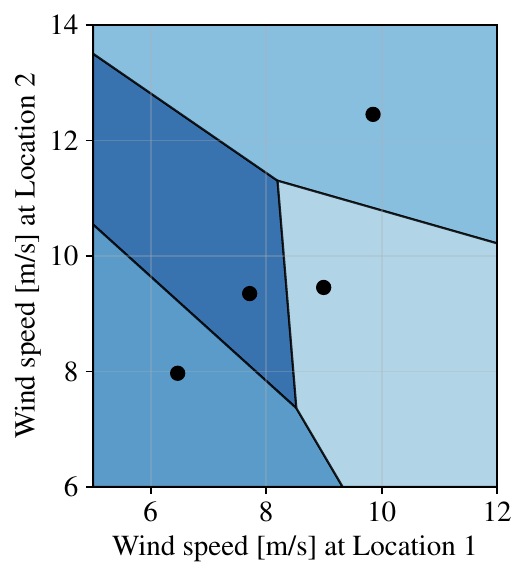}
        \caption{Four states $S=4$}
    \end{subfigure}    
    \caption{Collection of states induced by Voronoi partitions of the wind-speed plane.}
    \label{fig: partition_wind_speed}
\end{figure}

As illustrated in \Cref{fig: partition_wind_speed}, increasing the number of states refines the partition of the wind-speed space and allows agents to condition their trading decisions more precisely on expected wind conditions. In particular, moving from two to three or four states enables the market to capture spatial differences in wind patterns across locations.

While a small number of states may suffice in low-dimensional settings, the required number of states increases with the dimensionality of the underlying random variable. As additional sources of uncertainty are incorporated, more states are needed to represent them meaningfully (``curse of dimensionality'').

\section{Conclusions} \label{sec: conclusion}

We have shown that conditioning electricity contracts on the state of the world can ensure socially efficient day-ahead decisions and the effective utilization of renewable generation. 
Moreover, the required changes to the current day-ahead auction are conceptually straightforward: the existing design can be viewed as a special case of a state-contingent market with a single state.

The central challenge for implementation lies in defining an appropriate set of states. The key difficulty is that there are infinitely many possible ways to define states, while only a finite number can be implemented in practice. Any particular choice, therefore, requires a clear and well-justified rationale in order to gain acceptance among market participants. While we have proposed a principled approach based on optimal partitioning of the underlying uncertainty, substantial further research is needed.

On the theoretical side, alternative metrics for defining states could be explored that make use of richer information than considered here. In particular, electricity systems provide detailed data on load, generation, and network conditions, which could be incorporated into the state definition.  
On the practical side, simulation frameworks are needed to empirically compare different state definitions and to quantify the benefits of state-contingent contracts in terms of higher welfare, lower electricity prices, and greater integration of renewable generation.

\begingroup \parindent 0pt \parskip 0.0ex \def\enotesize{\small} \theendnotes \endgroup

\appendix 
\setcounter{section}{0}
\renewcommand{\thesection}{Appendix \Alph{section}}

\bibliography{refs} 

@article{mays2021quasi,
  title={Quasi-stochastic electricity markets},
  author={Mays, Jacob},
  journal={INFORMS Journal on Optimization},
  volume={3},
  number={4},
  pages={350--372},
  year={2021},
  publisher={INFORMS}
}

@article{kazempour2018stochastic,
  title={A stochastic market design with revenue adequacy and cost recovery by scenario: Benefits and costs},
  author={Kazempour, Jalal and Pinson, Pierre and Hobbs, Benjamin F},
  journal={IEEE Transactions on Power Systems},
  volume={33},
  number={4},
  pages={3531--3545},
  year={2018},
  publisher={IEEE}
}

@article{morales2012pricing,
  title={Pricing electricity in pools with wind producers},
  author={Morales, Juan M and Conejo, Antonio J and Liu, Kai and Zhong, Jin},
  journal={IEEE Transactions on Power Systems},
  volume={27},
  number={3},
  pages={1366--1376},
  year={2012},
  publisher={IEEE}
}

@article{pritchard2010single,
  title={A single-settlement, energy-only electric power market for unpredictable and intermittent participants},
  author={Pritchard, Geoffrey and Zakeri, Golbon and Philpott, Andrew},
  journal={Operations research},
  volume={58},
  number={4},
  pages={1210--1219},
  year={2010},
  publisher={INFORMS}
}

@article{zavala2017stochastic,
  title={A stochastic electricity market clearing formulation with consistent pricing properties},
  author={Zavala, Victor M and Kim, Kibaek and Anitescu, Mihai and Birge, John},
  journal={Operations Research},
  volume={65},
  number={3},
  pages={557--576},
  year={2017},
  publisher={INFORMS}
}

@article{hart1975optimality,
  title={On the optimality of equilibrium when the market structure is incomplete},
  author={Hart, Oliver D},
  journal={Journal of economic theory},
  volume={11},
  number={3},
  pages={418--443},
  year={1975},
  publisher={Elsevier}
}

@article{arrow1964theory,
    author = {Arrow, K. J.},
    title = {The Role of Securities in the Optimal Allocation of Risk-bearing},
    journal = {The Review of Economic Studies},
    volume = {31},
    number = {2},
    pages = {91-96},
    year = {1964}
}

@article{abada2026market,
  title={When market incompleteness is preferable to market power: Insights from power markets},
  author={Abada, Ibrahim and Ehrenmann, Andreas},
  journal={Operations Research},
  volume={74},
  number={2},
  pages={573--595},
  year={2026},
  publisher={INFORMS}
}

@article{singhal2026truthful,
  title={Truthful Production Uncertainty in Electricity Markets: A Two-Stage Mechanism},
  author={Singhal, Shobhit and Mitridati, Lesia and Romao, Licio},
  journal={arXiv preprint arXiv:2604.02455},
  year={2026}
}

@article{wong2007pricing,
  title={Pricing energy and reserves using stochastic optimization in an alternative electricity market},
  author={Wong, Steven and Fuller, J David},
  journal={IEEE Transactions on Power Systems},
  volume={22},
  number={2},
  pages={631--638},
  year={2007},
  publisher={IEEE}
}

@article{lete2026power,
  title={Power Generation Investment Under Zonal Electricity Pricing with Market-Based Re-dispatch},
  author={L{\'e}t{\'e}, Quentin and Smeers, Yves and Papavasiliou, Anthony},
  journal={The Energy Journal},
  volume={47},
  number={1},
  pages={131--164},
  year={2026},
  publisher={SAGE Publications Sage CA: Los Angeles, CA}
}

@article{schillings2012,
title = {A decision support system for assessing offshore wind energy potential in the North Sea},
journal = {Energy Policy},
volume = {49},
pages = {541-551},
year = {2012},
author = {Christoph Schillings and Thomas Wanderer and Lachlan Cameron and Jan Tjalling {van der Wal} and Jerome Jacquemin and Karina Veum}}

@BOOK{Eiselt2011location,
  title     = "Foundations of location analysis",
  editor    = "Eiselt, H A and Marianov, Vladimir",
  publisher = "Springer",
  series    = "International Series in Operations Research \& Management
               Science",
  year      =  {2011},
  address   = "New York, NY"
}

@book{debreu1959theory,
  title={Theory of value: An axiomatic analysis of economic equilibrium},
  author={Debreu, Gerard},
  year={1959},
  publisher={Yale University Press}
}

@book{Pflug2014,
  title = {Multistage Stochastic Optimization},
  journal = {Springer Series in Operations Research and Financial Engineering},
  publisher = {Springer International Publishing},
  author = {Pflug,  Georg Ch. and Pichler,  Alois},
  year = {2014}
}

@article{zakeri2019pricing,
  title={Pricing wind: a revenue adequate, cost recovering uniform price auction for electricity markets with intermittent generation},
  author={Zakeri, Golbon and Pritchard, Geoffrey and Bjorndal, Mette and Bjorndal, Endre},
  journal={INFORMS Journal on Optimization},
  volume={1},
  number={1},
  pages={35--48},
  year={2019},
  publisher={INFORMS}
}

@article{radner1982equilibrium,
  title={Equilibrium under uncertainty},
  author={Radner, Roy},
  journal={Handbook of mathematical economics},
  volume={2},
  pages={923--1006},
  year={1982},
  publisher={Elsevier}
}

@article{bjorndal2018stochastic,
  title={Stochastic electricity dispatch: A challenge for market design},
  author={Bj{\o}rndal, Endre and Bj{\o}rndal, Mette and Midthun, Kjetil and Tomasgard, Asgeir},
  journal={Energy},
  volume={150},
  pages={992--1005},
  year={2018},
  publisher={Elsevier}
}

@book{milgrom2017discovering,
  title={Discovering prices: auction design in markets with complex constraints},
  author={Milgrom, Paul},
  year={2017},
  publisher={Columbia University Press}
}

@article{graf2013measuring,
  title={Measuring competitiveness of the EPEX spot market for electricity},
  author={Graf, Christoph and Wozabal, David},
  journal={Energy Policy},
  volume={62},
  pages={948--958},
  year={2013},
  publisher={Elsevier}
}

@article{exizidis2019incentive,
  title={Incentive-compatibility in a two-stage stochastic electricity market with high wind power penetration},
  author={Exizidis, Lazaros and Kazempour, Jalal and Papakonstantinou, Athanasios and Pinson, Pierre and De Gr{\`e}ve, Zacharie and Vall{\'e}e, Fran{\c{c}}ois},
  journal={IEEE Transactions on Power Systems},
  volume={34},
  number={4},
  pages={2846--2858},
  year={2019},
  publisher={IEEE}
}

@book{mas1995microeconomic,
  title={Microeconomic theory},
  author={Mas-Colell, Andreu and Whinston, Michael Dennis and Green, Jerry R and others},
  volume={1},
  year={1995},
  publisher={Oxford university press New York}
}

@article{ratha2023moving,
  title={Moving from linear to conic markets for electricity},
  author={Ratha, Anubhav and Pinson, Pierre and Le Cadre, H{\'e}l{\`e}ne and Virag, Ana and Kazempour, Jalal},
  journal={European Journal of Operational Research},
  volume={309},
  number={2},
  pages={762--783},
  year={2023},
  publisher={Elsevier}
}

@article{mays2024sequential,
  title={Sequential pricing of electricity},
  author={Mays, Jacob},
  journal={Energy Economics},
  volume={137},
  pages={107790},
  year={2024},
  publisher={Elsevier}
}

@article{dvorkin2025regression,
  title={Regression equilibrium in electricity markets},
  author={Dvorkin, Vladimir},
  journal={IEEE Transactions on Energy Markets, Policy and Regulation},
  year={2025},
  publisher={IEEE}
}

@article{morales2014electricity,
  title={Electricity market clearing with improved scheduling of stochastic production},
  author={Morales, Juan M and Zugno, Marco and Pineda, Salvador and Pinson, Pierre},
  journal={European Journal of Operational Research},
  volume={235},
  number={3},
  pages={765--774},
  year={2014},
  publisher={Elsevier}
}

@article{hubner2025approximate,
  title={Approximate Equilibria in Nonconvex Markets: Theory and Evidence from European Electricity Auctions},
  author={H{\"u}bner, Thomas},
  journal={arXiv preprint arXiv:2503.02464},
  year={2025}
}

@article{herrero2020evolving,
  title={Evolving bidding formats and pricing schemes in USA and Europe day-ahead electricity markets},
  author={Herrero, Ignacio and Rodilla, Pablo and Batlle, Carlos},
  journal={Energies},
  volume={13},
  number={19},
  pages={5020},
  year={2020},
  publisher={MDPI}
}

@article{newbery2018market,
  title={Market design for a high-renewables European electricity system},
  author={Newbery, David and Pollitt, Michael G and Ritz, Robert A and Strielkowski, Wadim},
  journal={Renewable and Sustainable Energy Reviews},
  volume={91},
  pages={695--707},
  year={2018},
  publisher={Elsevier}
}

@book{papavasiliou2024optimization,
  title     = {Optimization Models in Electricity Markets},
  author    = {Papavasiliou, Anthony},
  year      = {2024},
  publisher = {Cambridge University Press},
  address   = {Cambridge, UK},
}

@book{magill2002theory,
  title     = {Theory of Incomplete Markets},
  author    = {Michael Magill and Martine Quinzii},
  publisher = {The MIT Press},
  year      = {2002},
  address   = {Cambridge, MA, USA}
}

@article{hubner2026package,
  title={Package bids in combinatorial electricity auctions: Selection, welfare losses, and alternatives},
  author={H{\"u}bner, Thomas and Hug, Gabriela},
  journal={Operations Research},
  volume={74},
  number={1},
  pages={56--71},
  year={2026},
  publisher={INFORMS}
}

@article{milgrom2025walrasian,
  title={A walrasian mechanism with markups for nonconvex markets},
  author={Milgrom, Paul and Watt, Mitchell},
  journal={Review of Economic Studies},
  year={2025},
  publisher={Oxford University Press UK}
}

@misc{nemo-committee,
  title        = {Single Day-Ahead Coupling (SDAC)},
  author = {{NEMO Committee}},
  note         = {Accessed: 2026-02-14},
  year         = {2026},
  url          = {https://www.nemo-committee.eu/sdac}
}

@misc{acer_market_rules,
  title        = {Market rules},
  author       = {{Agency for the Cooperation of Energy Regulators (ACER)}},
  note         = {Accessed: 2026-02-14},
  year         = {2026},
  url          = {https://www.acer.europa.eu/electricity/market-rules}
}

@misc{openmeteo_ensemble_api,
  title        = {Ensemble API},
  author       = {{Open-Meteo}},
  howpublished = {\url{https://open-meteo.com/en/docs/ensemble-api}},
  note         = {Accessed: 2026-02-14},
  year         = {2026},
  url          = {https://open-meteo.com/en/docs/ensemble-api}
}

@article{anderson2007forward,
  title={Forward contracts in electricity markets: The Australian experience},
  author={Anderson, Edward J and Hu, Xinin and Winchester, Donald},
  journal={Energy Policy},
  volume={35},
  number={5},
  pages={3089--3103},
  year={2007},
  publisher={Elsevier}
}

@article{aravena2021transmission,
  title={Transmission capacity allocation in zonal electricity markets},
  author={Aravena, Ignacio and L{\'e}t{\'e}, Quentin and Papavasiliou, Anthony and Smeers, Yves},
  journal={Operations Research},
  volume={69},
  number={4},
  pages={1240--1255},
  year={2021},
  publisher={INFORMS}
}

@article{Cramton2017,
  title = {Electricity market design},
  volume = {33},
  number = {4},
  journal = {Oxford Review of Economic Policy},
  publisher = {Oxford University Press (OUP)},
  author = {Cramton,  Peter},
  year = {2017},
  pages = {589–612}
}

@book{Graf2000,
  title = {Foundations of Quantization for Probability Distributions},
  journal = {Lecture Notes in Mathematics},
  publisher = {Springer Berlin Heidelberg},
  author = {Graf,  Siegfried and Luschgy,  Harald},
  year = {2000}
}

@article{stevens2024some,
  title={On some advantages of convex hull pricing for the European electricity auction},
  author={Stevens, Nicolas and Papavasiliou, Anthony and Smeers, Yves},
  journal={Energy Economics},
  volume={134},
  pages={107542},
  year={2024},
  publisher={Elsevier}
}

@article{graf2021market,
  title={Market power mitigation mechanisms for wholesale electricity markets: Status quo and challenges},
  author={Graf, Christoph and La Pera, Emilio and Quaglia, Federico and Wolak, Frank A},
  journal={Working Paper Stanford University},
  year={2021}
}

@article{adelowo2024redesigning,
  title={Redesigning automated market power mitigation in electricity markets},
  author={Adelowo, Jacqueline and Bohland, Moritz},
  journal={International Journal of Industrial Organization},
  volume={97},
  pages={103108},
  year={2024},
  publisher={Elsevier}
}

@article{azevedo2019strategy,
  title={Strategy-proofness in the large},
  author={Azevedo, Eduardo M and Budish, Eric},
  journal={The Review of Economic Studies},
  volume={86},
  number={1},
  pages={81--116},
  year={2019},
  publisher={Oxford University Press}
}

\end{document}